\documentclass[conference]{IEEEtran}

\usepackage{cite}
\usepackage{graphicx}
\usepackage{amsfonts,amsmath,amssymb,bm}
\usepackage{tabularx}
\usepackage{xcolor}
\usepackage{subcaption}
\usepackage{booktabs}
\usepackage{comment}
\usepackage{caption}
\usepackage{amsthm}
\usepackage{algorithm}
\usepackage{multirow}
\usepackage{algpseudocode}
\usepackage{hyperref}
\usepackage{cleveref}

\crefname{figure}{Fig.}{Figs.}
\Crefname{figure}{Fig.}{Figs.}

\definecolor{burgundy}{HTML}{800020}
\definecolor{normandy}{RGB}{85, 107, 47}

\DeclareMathOperator*{\argmin}{arg\,min}

\newtheorem{theorem}{Theorem}
\newtheorem{lemma}{Lemma}
\newtheorem{corollary}{Corollary}

\theoremstyle{definition}
\newtheorem{definition}{Definition}
\newtheorem{remark}{Remark}
\newtheorem{example}{Example}

\algrenewcommand\algorithmicindent{1.0em}
\algrenewcommand\algorithmicrequire{\textbf{Input:}}
\algrenewcommand\algorithmicensure{\textbf{Output:}}

\hypersetup{
    colorlinks=true,
    linkcolor=black,
    citecolor=black,
    urlcolor=blue,
    breaklinks=true,
    bookmarksdepth=3,
    pdfauthor={Noor Khial, Naram Mhaisen, Loay Ismail, Amr Mohamed},
    pdftitle={Between Scene Noise and Drift: Optimistic Online Policy Selection for UAV Radar Search},
    pdfsubject={Online Learning, UAV Systems},
    pdfkeywords={UAV search, online convex optimization, adaptive regret}
}

\begin{document}

\title{SEArch: Optimistic Policy Selection Between Scene Noise and Drift for UAV Radar Search}

\author{%
\IEEEauthorblockN{Noor Khial\IEEEauthorrefmark{1},
Naram Mhaisen\IEEEauthorrefmark{2},
Loay Ismail\IEEEauthorrefmark{1},
Amr Mohamed\IEEEauthorrefmark{1}}
\IEEEauthorblockA{%
\IEEEauthorrefmark{1}College of Engineering, Qatar University, Qatar\\
\IEEEauthorrefmark{2}Faculty of Electrical Engineering, Mathematics, and Computer Science, Delft University of Technology, The Netherlands\\
Email: nk1703044@qu.edu.qa, N.Mhaisen@tudelft.nl, \{loay.ismail, amrm\}@qu.edu.qa}%
}

\maketitle

\begin{abstract}
Unmanned Aerial Vehicles (UAVs) equipped with radar sensors are deployed for target search missions in diverse environments, where targets exhibit characteristic signatures (e.g., respiration micro-motion in human search) detectable through occlusions. A fundamental challenge arises from shifts in radar statistics as the UAV moves through a dynamic and potentially non-stationary environment, rendering any fixed signal-processing strategy suboptimal; yet perception and adaptation must run onboard a resource-constrained aerial node in real time. Since no single detector performs well across all conditions, we adopt a multi-policy paradigm and formulate UAV target search as an online policy selection problem over a library of specialized detectors, with performance measured by regret, the cumulative loss gap relative to the best policy in each scene. The setting couples in-scene stochastic noise with inter-scene shifts. Whereas prior methods capture only one regime, we account for both through the Stochastically Extended Adversary (SEA) framework, without requiring oracle knowledge of scene dynamics. Because adaptation must run at the UAV, we instantiate SEA through \textsc{SEArch}, a lightweight optimistic Follow the Regularized Leader (OFTRL) selector with an adaptive learning rate, achieving regret $O(\bar{\sigma}_T \sqrt{T} + \sqrt{J})$, where $\bar{\sigma}_T$ captures radar measurement noise and $J$ is the number of scene transitions over the mission horizon $T$. To enable rapid adaptation under frequent scene changes, we further introduce \textsc{W-SEArch}, a windowed variant that restarts every $w$ rounds and achieves regret $O(\bar{\sigma}_I \sqrt{w})$ under at most one transition per window. Experiments show up to 30\% regret reduction compared to non-adaptive baselines across a range of non-stationary settings.
\end{abstract}
\begin{IEEEkeywords}
UAV search, Radar sensing, Non-stationary environments, Policy selection, Online learning
\end{IEEEkeywords}

\section{Introduction}
\label{sec:introduction}

Unmanned Aerial Vehicles (UAVs) equipped with radar sensors are deployed for target search in maritime rescue, disaster response, and terrestrial operations, where targets exhibit characteristic signatures, such as respiration micro-motion in human search~\cite{rohman2021throughwall,jing2022respiration,khial2025multi}, detectable under occlusions. Yet detection operates at the limits of radar sensitivity: the signature is weak, clutter is strong, and propagation losses are severe. It is further strained as radar statistics change with UAV motion through a dynamic and potentially non-stationary environment, leaving even small changes (e.g., wall materials) enough to sharply degrade performance. Compounding this, perception and adaptation must run onboard a low-altitude UAV acting as the bottom tier of a low-altitude IoT network~\cite{abdellatif2025pdsr}, on a resource-constrained node where cloud offloading is precluded by the ultra-low-latency and connectivity demands of search missions. Any adaptive intelligence must therefore stay lightweight in computation and sensing.

\begin{figure}[t!]
    \centering
    \includegraphics[width=\columnwidth]{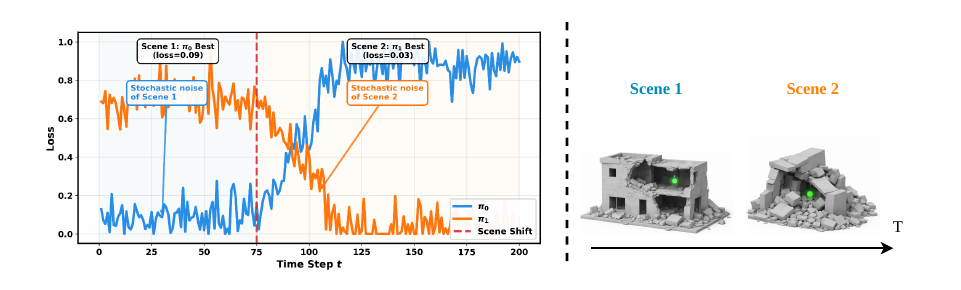}
    \caption{\textbf{Scene Drift:} $\pi_0$ is optimal under Scene 1's stochastic noise and $\pi_1$ is optimal in Scene 2. Our approach aims to place high weight on $\pi_0$ during Scene 1 without oscillating due to noise, then adapts to $\pi_1$ after the transition, without prior knowledge of the shift timing or the noise scale.}
    \label{fig:two_policy_scene_shift}
\end{figure}

No single signal processing strategy performs well across this range of conditions, since a detector tuned for one geometry or clutter profile degrades once the environment shifts. We therefore adopt a multi policy paradigm. Rather than designing one processor for all scenarios, we assume a finite library of $K$ policies $\Pi=\{\pi_1,\dots,\pi_K\}$, each trained offline for a particular scene archetype\footnote{We refer the reader to \cite{khial2025drone} which covers one example of a pre-trained set. In this work we mainly focus on the online selection layer.} and possibly encapsulating combinations of detection thresholds or clutter conditions. The task then becomes one of online selection, in which at each step $t$ an online selector chooses a probability distribution $\bm{x}_t \in \Delta^K$ over $\Pi$, executes a policy according to $\bm{x}_t$, and incurs a bounded loss $\ell_t(\bm{x}_t)$ reflecting detection performance, with the goal of tracking the policy of minimum loss in the current scene without a priori knowledge of scene drifts or noise magnitude. We cast this as Online Convex Optimization (OCO)~\cite{hazan2016oco}, in which performance is measured by \emph{regret}, the cumulative loss gap relative to the best policy in each scene, and a sublinear regret ensures the selector eventually matches an oracle with perfect scene knowledge.

The difficulty of detection lies in the presence of two distinct sources of uncertainty acting on the selector. The first, \emph{in-scene stochastic noise}, arises because each scene $s \in \mathcal{S}$ induces a distribution $\mathcal{N}(\mu_s, \sigma_s^2)$ over radar measurements, where $\sigma_s^2$ captures fluctuations from UAV micro-motions and clutter. The second, \emph{inter-scene distributional shift}, arises as the environment evolves and the measurement distribution shifts from $\mathcal{N}(\mu_s, \sigma_s^2)$ to $\mathcal{N}(\mu_{s'}, \sigma_{s'}^2)$, with both the mean gap $\|\mu_s - \mu_{s'}\|$ and the transition timing unknown and arbitrary (see \Cref{fig:two_policy_scene_shift}). The selector must therefore be stable under noise, so that transient fluctuations do not trigger oscillation between policies, yet responsive at transitions, so that scene changes are tracked. Most existing approaches address only one of these regimes. Adversarial OCO methods~\cite{khial2025eswa, 9165951} treat all variation as arbitrary and yield conservative bounds independent of the actual difficulty, whereas stochastic models cannot capture sudden changes in scenes. To account for both, we adopt the \emph{Stochastically Extended Adversary} (SEA) framework~\cite{sachs2022between}, which splits the problem difficulty into two measurable parts, namely the noise accumulated within scenes and the variation accumulated across them, denoted $\bar{\sigma}_T^2$ and $\bar{\Sigma}_T^2$, respectively. The resulting bounds interpolate between the stable and non-stationary regimes rather than committing to either.

Because perception and adaptation must run at the UAV, we instantiate SEA through \textsc{SEArch}, an optimistic Follow the Regularized Leader (OFTRL)~\cite{rakhlin2013predictable} selector with an adaptive learning rate~\cite{sachs2022between}. OFTRL is lightweight, since each round reduces to a simplex projection over the $K$ policies, with no online training and no accelerator, making it well suited to a constrained onboard node. The optimistic component further sharpens this advantage. In each round the learner predicts the next loss from the previous radar observation, which proves accurate while the UAV dwells in a scene, whereas the adaptive learning rate automatically grows cautious when predictions fail at a transition. The resulting bound tightens to near optimal stochastic rates within a single scene and relaxes to non-stationary rates when changes are frequent. We further refine this bound by noting that under piecewise-stationary scenes the adversarial variation concentrates at transitions; hence $\bar{\Sigma}_T$ scales as $O(\sqrt{J/T})$, and the bound becomes $O(\bar{\sigma}_T \sqrt{T} + \sqrt{J})$. Thus, the switching penalty grows only as $\sqrt{J}$ in the number of scene transitions.

Standard FTRL methods aggregate the entire observation history into a single state vector~\cite[Section~5.2]{hazan2016oco}, which induces inertia after scene changes~\cite{mhaisen25a, Ahn2024UnderstandingAO}. Concretely, when the UAV leaves a hallway for a room, the clutter spectrum shifts, yet the pre-transition frames still dominate the update, so the selector keeps favoring the previous policy. This inertia becomes particularly severe in UAV missions where scene durations vary unpredictably with trajectory and speed. To overcome the history effects, we introduce \textsc{W-SEArch}, a windowed variant that restricts updates to the most recent $w$ radar observations. Motivated by empirical evidence that UAV radar coherence persists only over short horizons~\cite{jing2022respiration,rohman2021throughwall}, \textsc{W-SEArch} prunes obsolete measurements from previously traversed scenes. Namely, at most one scene boundary falls within any window of length $w$, we establish an adaptive regret of $O(\bar{\sigma}_I \sqrt{w})$ on any contiguous interval of that length.

Our contributions are summarized as follows.
\begin{itemize}
  \item We formulate UAV search as online policy selection over a finite library of signal processing policies, and model the uncertainty structure (in-scene noise and inter-scene shifts) using the SEA framework, which yields problem-dependent regret bounds that scale with the actual environmental difficulty.

  \item We apply \textsc{SEArch}, an OFTRL selector with an adaptive learning rate, achieving a regret bound of $O(\bar{\sigma}_T \sqrt{T} + \sqrt{J})$ that decomposes into stochastic and switching components, where $J$ is the number of scene transitions and the switching penalty grows only as $\sqrt{J}$.

  \item We introduce \textsc{W-SEArch}, a windowed OFTRL variant that retains only the most recent $w$ observations, enabling continuous adaptation to irregular scene changes with regret $O(\bar{\sigma}_I \sqrt{w})$ on any contiguous interval of length $w$. The restart mechanism keeps the selector lightweight for onboard deployment.

  \item We validate our approach across various levels of non-stationarity, achieving low regret and accurate scene tracking with up to 30\% regret reduction compared to cumulative-history baselines.
\end{itemize}
\section{Related Work}
\label{sec:related}

\textbf{UAV Radar Sensing.}
UAV-mounted sensor systems have emerged as a key technology for target detection across diverse environments~\cite{rohman2021throughwall}. Multiple sensing modalities have been explored for target finding missions, including vision-based systems~\cite{shurrab2023rl,hussain2023predictive}, thermal imaging for detecting body heat signatures~\cite{rudol2008human,andriluka2010monocular}, and radar systems capable of penetrating debris and walls~\cite{rohman2021throughwall, qassmi2026obstacle, jing2022respiration}. These studies demonstrate the feasibility of target sensing from a hovering UAV.

A key challenge is the non-stationary nature of sensor returns. Empirical evidence shows that signal statistical properties remain stable only over short windows before platform motion, environmental changes, and multipath variations induce significant drift~\cite{jing2022respiration,hussain2023predictive}. Recent work has explored integrating adaptive sensing with UAV path planning~\cite{shao2022intelligent,khial2025drone} and learning-based approaches for target localization in noisy sensor environments~\cite{mohammed2023deep,shurrab2023rl}. However, these approaches assume fixed signal processing pipelines that do not adapt to changing operating conditions across different search scenes. Our work addresses this gap by formulating adaptive policy selection that accounts for the non-stationary nature of UAV sensor returns across different environments.

\textbf{UAV-Based Search.}
The search for UAV has been studied from multiple perspectives for target finding missions. Classical approaches focus on coverage path planning~\cite{cabreira2019survey,shao2020efficient}, where the objective is to cover a search region while minimizing mission time or energy consumption. Information-theoretic methods~\cite{julian2012distributed} formulate search as maximizing information gain about target location, using Bayesian filters to maintain belief distributions over possible target positions. Optimal control approaches~\cite{ragi2016uav} cast the search as a trajectory optimization problem with detection probability models. More recent work has incorporated learning methods for UAV search~\cite{kurunathan2023machine}: deep reinforcement learning has been applied to learn search policies from simulated missions~\cite{shurrab2023rl, kurunathan2023machine}, while multi-agent coordination for cooperative search has been explored through consensus-based approaches~\cite{choi2009consensus,sheng2006distributed}. Active Vision-based perception systems~\cite{krause2008near} demonstrate adaptive sensing strategies but assume stationary environments.

A growing line of work frames search as policy selection from finite libraries. Motion primitive sensing with learned selectors~\cite{goel2022active} enables rapid adaptation while maintaining interpretability, and bandit-based multi-robot target tracking~\cite{xu2023bandit} addresses exploration-exploitation tradeoffs over discrete actions. Recent UAV search work has also started to incorporate online learning:~\cite{khial2025eswa} formulates UAV search as an adversarial bandit problem over finite strategy sets, and~\cite{khial2025drone} extends this to reinforcement learning policies. However, both assume a purely adversarial environment, yielding guarantees that do not exploit the structure of scene-based environments.

\textbf{Online Learning.}
OCO provides a principled framework for sequential decision making under time-varying conditions~\cite{hazan2016oco}: at each round, a learner updates its strategy based on the observed feedback, with the goal of minimizing cumulative loss over the horizon. This framework has found a wide range of applications~\cite{khial2025drone, mhaisen2022online}. Performance is analyzed through two regret notions. \emph{Static regret} compares with the single best fixed policy in hindsight, whereas \emph{adaptive regret}~\cite[Section 15.2]{orabona2019modern} competes with the best policy on every contiguous sub-interval of the horizon $T$. The latter is the appropriate metric for our setting, as the optimal policy varies across scenes rather than remaining fixed.

Several frameworks model non-stationary environments with varying degrees of structure. Adversarial models~\cite[Ch.~11]{lattimore2020bandit, hoi2021online} permit arbitrary changes but yield conservative $O(\sqrt{T})$ guarantees that are independent of the actual degree of non-stationarity. At the other extreme, stochastic multi-armed bandits~\cite[Ch.~4]{lattimore2020bandit, hoi2021online} assume i.i.d. rewards, which fails to capture settings where distributions shift over time. The SEA framework~\cite{sachs2022between, wang2025parameter} interpolates between these two extremes by decomposing problem difficulty into two explicit components: cumulative stochastic variance, which captures measurement noise within each regime, and cumulative adversarial variation, which captures distributional shifts across regimes. This decomposition yields problem-dependent bounds that scale with the actual environmental difficulty, tightening to near-optimal stochastic rates in stationary environments and relaxing to adversarial rates when shifts are frequent.

\textbf{Non-Stationary Learning.}
Practical algorithms for non-stationary learning employ  forgetting mechanisms. Sliding-window methods~\cite{gaillard2014second,luo2015achieving} base decisions on recent observations but require careful tuning of the window size. Restart methods~\cite{jadbabaie2015online} periodically reset the learning state, either at fixed intervals or when triggered by detected changes. Meta-learning approaches maintain multiple learners and weight them based on observed performance~\cite{finn2019online, al2017continuous, zhang2018adaptive}, and~\cite{mhaisen2026partially} balances between greedy and lazy updates via partial laziness. 

FTRL~\cite[Ch.~7]{orabona2019modern} and its optimistic variant~\cite{rakhlin2013predictable} are core online learning tools that use predictions of future losses to accelerate convergence in predictable environments. However, standard FTRL aggregates the entire loss history into a cumulative state, which hinders adaptation when the environment changes~\cite{mhaisen25a}. Recent analysis reveals that the bottleneck is the unbounded growth of historical information decoupled from current iterates~\cite{mhaisen2026partially, mhaisen25a}. This motivates history-pruning variants that discard obsolete observations while maintaining the benefits of optimistic predictions.

\textbf{Contributions.}
Our work on UAV search~\cite{khial2025eswa, khial2025drone} adopts an adversarial view, yielding horizon dependent regret bounds that ignore the underlying scene structure. The present work departs from this view by modeling UAV missions as a sequence of scene induced stationary segments analyzed through the SEA framework, producing problem dependent regret bounds that scale with scene variance $\bar{\sigma}_T^2$ and the number of transitions $J$. A windowed OFTRL variant further improves adaptability in rapid missions, with regret controlled by the number of transitions per window. The contribution is also modular with respect to the policy library; the offline pool is constructed in~\cite{khial2025drone}, while here we target the orthogonal online selection layer that can sit atop any such pool.
\section{System Model and Problem Formulation}
\label{sec:system_problem}

We consider a UAV equipped with a radar conducting a search mission in a \emph{dynamic, potentially non-stationary} environment. We position the UAV as the bottom tier of a low-altitude IoT network, consistent with the multi-tier search architecture of~\cite{abdellatif2025pdsr}, in which low-altitude UAVs (LA-UAVs) operate close to the ground to carry out the search operation. Our work targets the policy selection layer that runs onboard the LA-UAV during this sensing mission: as the LA-UAV navigates the environment, its radar exploits characteristic target signatures (e.g., respiration-induced micro-motion in human search~\cite{rohman2021throughwall}, with all processing performed locally.

\subsection{Motivating Examples: Mission Regimes}
\label{sec:regime_examples}

We illustrate three mission behaviors that motivate our approach. Each is characterized by the number and frequency of distinct operating =regimes the UAV encounters.

\begin{example}[Single-Regime Dominance]
\label{ex:single_regime}
When the UAV dwells in a region with nearly fixed geometry and propagation conditions (e.g., a single facade segment in terrestrial search, or a uniform sea state in maritime operations), the radar operating environment remains constant over extended intervals.
\end{example}

\begin{example}[Gradual Drift]
\label{ex:gradual_drift}
Many missions exhibit \emph{slowly evolving regimes}, where continuous range and aspect variation cause the radar environment to change smoothly. The optimal policy may then shift gradually over time.
\end{example}

\begin{example}[Rapid Multi-Regime Switching]
\label{ex:rapid_switching}
When the UAV repeatedly encounters distinct operating conditions over short horizons (e.g., alternating between open areas and heavily occluded zones, or transitioning between varying wall materials in terrestrial settings or changing sea states in maritime operations), the optimal policy changes frequently and abruptly.
\end{example}

\subsection{Multi-Policy Paradigm}

We adopt a multi policy approach and assume access to a library of $K$ specialized signal processing policies, each optimized for a particular radar operating regime. The learning task is to adaptively select among these policies online as the operating regime evolves. This selection problem addresses two sources of uncertainty: (i)~\emph{in-scene stochastic noise} from measurement variability and clutter within a fixed scene, and (ii)~\emph{inter-scene non-stationarity} from changes in the underlying radar return distribution as scenes transition.

We adopt the SEA framework~\cite{sachs2022between}, which models environments that combine stochastic fluctuations within regimes and adversarial shifts across regimes. 

\subsubsection{Notation}
Throughout this paper, we denote scalars in italic (e.g., $T$, $K$), vectors in bold lowercase (e.g., $\bm{x}_t\in\mathbb{R}^K$, $\bm{\ell}_t\in\mathbb{R}^K$), and sets in calligraphic font (e.g., $\mathcal{S}$, $\Delta^K$). We use $[T] := \{1,\dots,T\}$ to denote integer intervals, $\langle\cdot,\cdot\rangle$ for the Euclidean inner product, and $\|\cdot\|$ for the Euclidean ($\ell_2$) norm.

\subsection{Scene Model and Piecewise Stationarity}
\label{sec:scene_model}

Each radar operating condition is modeled as a \emph{scene}. Let $\mathcal{S} = \{s_1, s_2, \ldots, s_M\}$ denote a finite set of possible scenes, where each $s \in \mathcal{S}$ represents a characteristic physical configuration (environmental conditions, geometry, target location, and propagation environment) that induces a measurement distribution $D_s$ over radar returns. At each time step $t \in [T]$, the UAV operates in scene $s_t \in \mathcal{S}$, which determines the statistical properties of radar measurements.

UAV-mounted radar observations remain statistically consistent only over short horizons before drift or scene changes occur~\cite{jing2022respiration,rohman2021throughwall}. We formalize this behavior via a \emph{piecewise-stationary model}: the horizon $[T]$ is partitioned into $J$ contiguous stationary segments, each corresponding to a fixed scene.

\begin{definition}[Stationary Segments]
\label{def:segments}
Let $1 = \tau^{(1)} < \tau^{(2)} < \cdots < \tau^{(J)} < \tau^{(J+1)} = T+1$ denote the scene transition times (unknown to the learner). The horizon $[T]$ is partitioned into $J$ contiguous segments $I^{(j)} = [\tau^{(j)}, \tau^{(j+1)}-1]$ for $j=1,\ldots,J$. Within each segment $I^{(j)}$, the scene remains fixed at some $s^{(j)} \in \mathcal{S}$, inducing a stationary measurement distribution $D_{s^{(j)}}$. Equivalently, $s_t = s^{(j)}$ for all $t \in I^{(j)}$. At each boundary $\tau^{(j)}$ for $j \geq 2$, the scene switches and the distribution changes from $D_{s^{(j-1)}}$ to $D_{s^{(j)}}$.
\end{definition}

\subsubsection{Radar Measurement Distribution}
Let $\xi_t$ denote the processed radar measurement at round $t$ (e.g., a feature vector extracted from raw returns). Each $s \in \mathcal{S}$ induces a probability distribution $D_s$ over the measurement space, so that
\begin{equation}
\label{eq:measurement_dist}
\xi_t \sim D_{s_t}.
\end{equation}
We assume that measurements exhibit Gaussian noise around a scene-specific mean, i.e., $D_s = \mathcal{N}(\mu_s, \sigma_s^2)$, where observations fluctuate around the mean $\mu_s$ due to \emph{in-scene stochastic noise} with variance $\sigma_s^2$.

\begin{figure*}[t!]
    \centering
    \includegraphics[width=\textwidth]{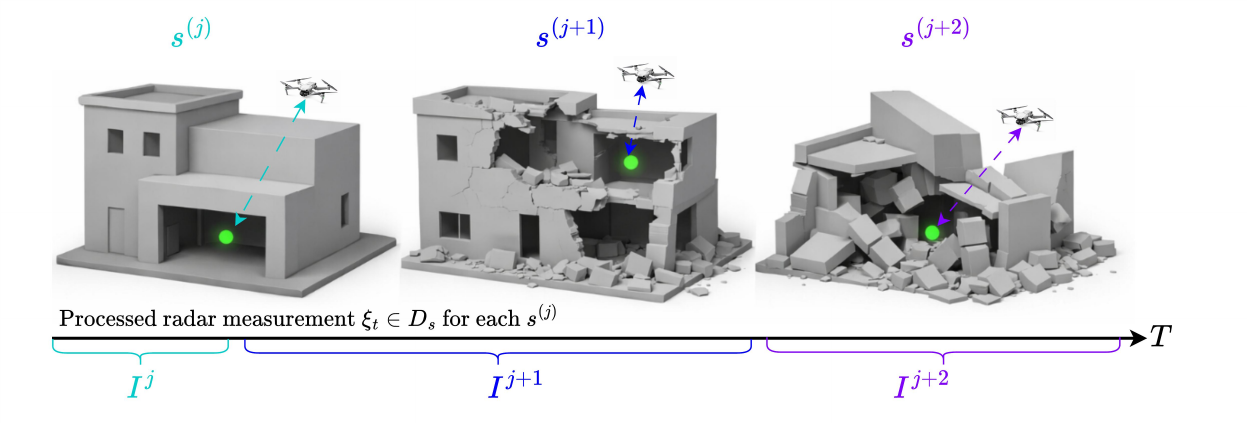}
    \caption{Example of the scene sequence model. The mission evolves through a sequence of scenes $s^{(j)}\in\mathcal{S}$, each corresponding to a characteristic radar operating regime (e.g., environmental conditions, geometry, and target location) that persists over a stationary segment $I^{(j)} = [\tau^{(j)}, \tau^{(j+1)}-1]$. The green circle denotes the hidden target (e.g., a human survivor) that the UAV seeks to detect through partial occlusions. Within each segment, the scene induces a measurement distribution $D_{s^{(j)}}$, from which radar observations $\xi_t$ are drawn (for instance, micro-Doppler feature vectors extracted from raw radar returns). The goal is to detect the presence of the target by selecting, online, the signal processing policy best suited to the current scene statistics.}
    \label{fig:sys_model}
\end{figure*}

\subsection{Signal Processing Policies and Online Selection}
\label{sec:policy_model}

\subsubsection{Policy Library}
We assume a finite library of $K$ signal processing policies $\Pi = \{\pi_1,\dots,\pi_K\}$, each optimized offline for a particular scene archetype using labeled radar data or domain heuristics. Each policy $\pi_i$ is a detector that processes the radar measurement $\xi$ to decide on the presence or absence of the target, and is associated with a loss function $L(\pi_i, \xi) \in [0,1]$ that captures its detection error (e.g., missed detections or false alarms) under measurement $\xi$. 

The offline construction of the policy library is complementary to, and independent of, the online selection problem we address here. Our approach is therefore \emph{modular}: any suitable library can be plugged in at the bottom of the stack, while the online selector on top remains unchanged. For one concrete instantiation of such a library for UAV-based human search, including reinforcement-learning-based policy primitives, we refer the reader to~\cite{khial2025drone}. In this work, we focus on the upper layer: given any such library, select online, at each time step $t$, the policy from $\Pi$ that best matches the current scene statistics, without prior knowledge of transitions or their timing.

\subsubsection{Online Selection Protocol}

At each round $t \in [T]$, the learner selects a weight vector $\bm{x}_t\in\Delta^K$ over policies, where $\Delta^K := \{ \bm{x} \in \mathbb{R}_{\ge 0}^K : \sum_{i=1}^K x_i = 1 \}$ is the probability simplex. After the selection, the environment determines the current scene $s_t$, a sample $\xi_t \sim D_{s_t}$ is observed, and the losses of all policies are revealed, yielding the full loss vector $\bm{\ell}_t = (L(\pi_1,\xi_t),\dots,L(\pi_K,\xi_t)) \in [0,1]^K$.

\subsubsection{Weighted Loss}
The loss induced by the $\bm{x}_t$ is
\begin{equation}
\label{eq:weighted_loss}
\ell_t(\bm{x}_t) = \langle\bm{x}_t,\bm{\ell}_t\rangle
= \sum_{i=1}^K x_{t,i}\,L(\pi_i,\xi_t).
\end{equation}

\begin{remark}[Assumptions]
\label{rem:assumptions}
Our formulation rests on:
\begin{itemize}
\item[(A1)] \textbf{Full information.} We observe the loss vector $\bm{\ell}_t$ for all $K$ policies at every round. This is a structural property of the sensing pipeline. Every policy $\pi_i$ is a signal processing detector applied to the same received radar return $\xi_t$, hence the full vector $\bm{\ell}_t = (L(\pi_1,\xi_t),\dots,L(\pi_K,\xi_t))$ is recovered from a single measurement, acquired once per round. Thus full feedback incurs no additional sensing  and no extra radar acquisitions, since the selector reuses the detection computation already performed onboard.
\item[(A2)] \textbf{Piecewise stationarity.} The distribution $D_{s_t}$ is constant within segments and changes only at segment boundaries, as formalized in \Cref{def:segments}.
\item[(A3)] \textbf{Unknown switches.} The algorithm does not require knowledge of $J$ or of the transition times $\{\tau^{(j)}\}$ and adapts purely from observed losses.
\end{itemize}
\end{remark}

\begin{figure*}[t!]
    \centering
    \includegraphics[width=\textwidth]{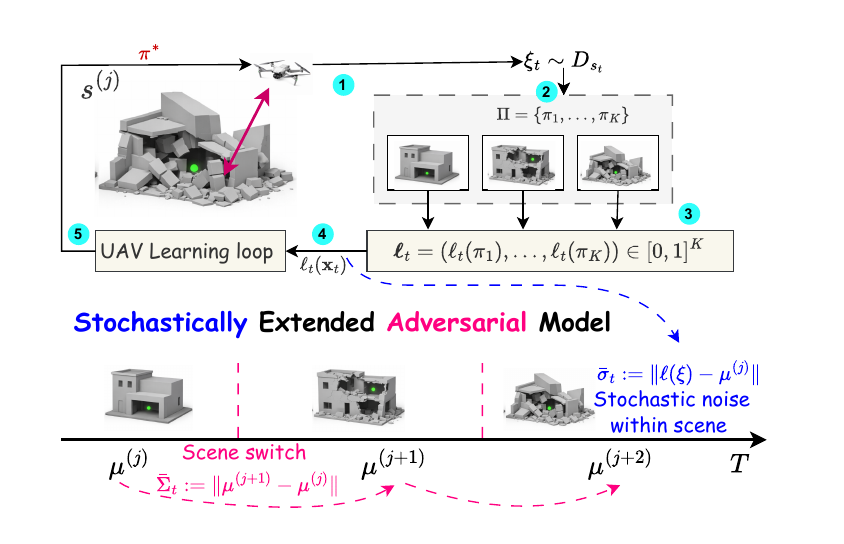}
    \caption{SEA mapping of online policy selection under scene-dependent radar noise and switching. At round $t$, the UAV operates in a single active scene $s_t$, which induces a piecewise-stationary measurement distribution $D_{s_t}$. A radar measurement $\xi_t \sim D_{s_t}$ is collected and processed simultaneously by all $K$ policies in $\Pi$, yielding the loss vector $\bm{\ell}(\xi_t) = (L(\pi_1, \xi_t), \dots, L(\pi_K, \xi_t))$ with mean $\bm{\mu}_t=\mathbb{E}_{\xi\sim D_{s_t}}[\bm{\ell}(\xi)]$. SEA measures both the in-scene stochastic variance, which captures measurement noise within the active scene, and the inter-scene drift, which captures the shift in loss statistics as the environment transitions across scenes.}
    \label{fig:policy_selection}
\end{figure*}

\subsection{Problem Formulation}
\label{sec:problem_formulation}

\subsubsection{Objective}
The challenge is to adapt online to the current scene under both scene noise and transitions. We therefore seek a sequence of weight vectors $(\bm{x}_t)_{t=1}^T\subset\Delta^K$ that minimizes the cumulative loss $\sum_{t=1}^T \ell_t(\bm{x}_t)$.

\subsubsection{Adaptive Regret}
We measure performance relative to the best fixed policy on each segment. By \Cref{def:segments}, the mission horizon $[T]$ is partitioned into segments $I^{(j)} = [\tau^{(j)}, \tau^{(j+1)}-1]$. We define the regret on segment $I^{(j)}$ as
\begin{equation}
\label{eq:segment_regret}
\mathrm{Reg}_{I^{(j)}}
:= \sum_{t\in I^{(j)}} \ell_t(\bm{x}_t)
\;-\;
\min_{\bm{x}\in\Delta^K}\sum_{t\in I^{(j)}}\ell_t(\bm{x}).
\end{equation}
Since $\ell_t(\bm{x})=\langle\bm{x},\bm{\ell}_t\rangle$ is linear and $\Delta^K$ is the simplex, the optimal weight on $I^{(j)}$ places all mass on a single policy:
\[
\min_{\bm{x}\in\Delta^K}\sum_{t\in I^{(j)}}\ell_t(\bm{x}) = \min_{i\in[K]}\sum_{t\in I^{(j)}}L(\pi_i,\xi_t).
\]
More generally, for missions with regime changes we aim to compete with the best policy over \emph{any} contiguous window of length $w$. This leads to the \emph{adaptive regret}~\cite[Section 15.2]{orabona2019modern}:
\begin{equation}
\label{eq:adaptive_regret_general}
\begin{aligned}
&\mathrm{Reg}_T(w) := \max_{[r,r+w-1] \subseteq [T]} \\
&\Bigg( \sum_{t=r}^{r+w-1} \ell_t(\bm{x}_t) 
\quad - \min_{\bm{x} \in \Delta^K}\sum_{t=r}^{r+w-1} \ell_t(\bm{x})\Bigg).
\end{aligned}
\end{equation}

\subsection{Stochastically Extended Adversary (SEA) Framework}
\label{sec:sea_mapping}

Our radar setting exhibits measurement noise within scenes and abrupt transitions between scenes. The SEA framework~\cite{sachs2022between} decomposes problem difficulty into a cumulative stochastic variance term and an inter-scene drift term.

\subsubsection{Mapping to SEA}
Recall that at round $t$ the learner selects $\bm{x}_t\in\Delta^K$ and observes a loss vector $\bm{\ell}_t\in[0,1]^K$, incurring the weighted loss $\ell_t(\bm{x}_t)$ in \Cref{eq:weighted_loss}. From \Cref{eq:measurement_dist} and \Cref{def:segments}, the measurement sample is generated by a scene-induced distribution $\xi_t \sim D_{s_t}$ that is stationary over segments with unknown boundaries. The SEA model formalizes this by allowing the environment to choose the sequence of distributions $(D_{s_t})_{t=1}^T$ (capturing inter-scene changes), while each $\xi_t$ is an observed sample from $D_{s_t}$.

\subsubsection{Loss Function}
To connect \Cref{eq:weighted_loss} to the stochastic sampling in \Cref{eq:measurement_dist}, we define the vector $\bm{\ell}(\xi) := (L(\pi_1,\xi),\ldots,L(\pi_K,\xi))\in[0,1]^K$ and the SEA loss
\[
f(\bm{x},\xi) := \langle\bm{x},\bm{\ell}(\xi)\rangle.
\]
With $\xi_t\sim D_{s_t}$, the observed loss satisfies $\bm{\ell}_t=\bm{\ell}(\xi_t)$, and the incurred loss coincides with the SEA loss: $\ell_t(\bm{x}_t) = f(\bm{x}_t,\xi_t)$.

SEA analyzes performance through the expected loss $F_t(\bm{x}) := \mathbb{E}_{\xi\sim D_{s_t}}[f(\bm{x},\xi)]$. With the mean loss vector
\[
\bm{\mu}_t := \mathbb{E}_{\xi\sim D_{s_t}}[\bm{\ell}(\xi)]\in[0,1]^K,
\]
we have $F_t(\bm{x})=\langle\bm{x},\bm{\mu}_t\rangle$. Under piecewise stationarity (\Cref{def:segments}), the mean loss is constant within segments: for $t \in I^{(j)}$, $\bm{\mu}_t = \bm{\mu}^{(j)}$, where $\bm{\mu}^{(j)} \in [0,1]^K$ denotes the time-invariant mean loss vector for segment $j$. Equivalently,
\begin{equation}
\label{eq:mu_segment}
\bm{\mu}^{(j)} := \mathbb{E}_{\xi \sim D_{s^{(j)}}}[\bm{\ell}(\xi)]
= \bm{\mu}_t \quad \text{for all } t \in I^{(j)}.
\end{equation}
Thus $\bm{\mu}_t$ changes only at the switch times $\{\tau^{(j)}\}_{j=2}^{J}$.

\begin{definition}[SEA Stochastic Variance and Inter-Scene Drift]
\label{def:variance_variation}
SEA quantifies problem difficulty through two measures:
\begin{itemize}
\item \textbf{Cumulative stochastic variance} (in-scene noise):
\begin{equation}
\label{eq:sea_variance}
\bar{\sigma}_T^2
:= \frac{1}{T}\sum_{t=1}^T \mathbb{E}_{\xi\sim D_{s_t}}\!\left[\|\bm{\ell}(\xi) - \bm{\mu}_t\|^2\right].
\end{equation}
This measures the average dispersion of policy losses within each scene-induced distribution, capturing measurement noise and clutter.

\item \textbf{Cumulative inter-scene drift}:
\begin{equation}
\label{eq:sea_variation_energy}
\bar{\Sigma}_T^2
:= \frac{1}{T}\sum_{t=2}^T \|\bm{\mu}_t-\bm{\mu}_{t-1}\|^2.
\end{equation}
This measures the average shift in the mean loss vector.
\end{itemize}
\end{definition}

\subsubsection{Interpretation for Scene Changes}
Under piecewise stationarity (\Cref{def:segments}), $\bm{\mu}_t = \bm{\mu}^{(j)}$ is constant within each segment $I^{(j)}$ and changes only at switch times. Thus, $\bar{\Sigma}_T^2$ concentrates its mass on the $J-1$ scene boundaries:
\[
\bar{\Sigma}_T^2 = \frac{1}{T} \sum_{j=2}^J \|\bm{\mu}^{(j)} - \bm{\mu}^{(j-1)}\|^2.
\]
This makes explicit that drift occurs only at the transitions $\{\tau^{(2)}, \ldots, \tau^{(J)}\}$ and remains zero within segments.

Beyond the piecewise-stationary case, $\bar{\Sigma}_T^2$ is defined per time step and therefore captures gradual drift as a worst-case: when drift is smooth rather than concentrated at switches, $\bar{\Sigma}_T^2$ accumulates small per-step contributions $\|\bm{\mu}_t - \bm{\mu}_{t-1}\|^2$ that the SEA analysis already accounts for at every round. Consequently, the SEA regret bound applies unchanged to gradual drift (\Cref{ex:gradual_drift}), which is absorbed directly into $\bar{\Sigma}_T^2$ without requiring a modified analysis.

The mission regimes in Examples~\ref{ex:single_regime}--\ref{ex:rapid_switching} can be characterized precisely: single regime corresponds to $J = O(1)$ and $\bar{\Sigma}_T^2 \approx 0$; gradual drift to $J = O(\sqrt{T})$ with slowly growing $\bar{\Sigma}_T^2$; and rapid switching to $J = \Theta(T)$ with large $\bar{\Sigma}_T^2$. Applying the SEA framework to our problem requires (i) convexity of $f(\cdot,\xi)$ for $\xi$, (ii) bounded gradients, and (iii) smoothness. We verify below that our loss function satisfies these requirements.

\begin{lemma}[Linearity and Smoothness]
\label{lem:linear_smooth}
For any $\bm{\ell}_t\in[0,1]^K$, the loss $\ell_t(\bm{x})=\langle\bm{x},\bm{\ell}_t\rangle$ is linear (hence convex) on $\Delta^K$ with $\|\bm{\ell}_t\| \le \sqrt{K}$. Moreover, $\ell_t$ is $0$-smooth, and the expected loss $F_t(\bm{x})=\langle\bm{x},\bm{\mu}_t\rangle$ inherits these properties.
\end{lemma}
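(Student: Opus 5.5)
The plan is to verify each claimed property directly from the definitions, since the statement is elementary. For linearity, I will observe that $\ell_t(\bm{x})=\langle\bm{x},\bm{\ell}_t\rangle$ satisfies $\ell_t(a\bm{x}+b\bm{y}) = a\,\ell_t(\bm{x})+b\,\ell_t(\bm{y})$ for all scalars $a,b$ by bilinearity of the inner product. Restricting to convex combinations ($a=\lambda$, $b=1-\lambda$, $\lambda\in[0,1]$) with $\bm{x},\bm{y}\in\Delta^K$ gives the convexity inequality with equality. Since $\Delta^K$ is itself convex, $\ell_t$ is a convex function on a convex domain, which is what SEA requires.

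For the gradient bound, I will note that $\nabla \ell_t(\bm{x}) = \bm{\ell}_t$ for every $\bm{x}$. Each coordinate satisfies $L(\pi_i,\xi_t)\in[0,1]$, so
\[
\|\bm{\ell}_t\|^2=\sum_{i=1}^K L(\pi_i,\xi_t)^2 \le K ,
\]
which yields $\|\bm{\ell}_t\|\le\sqrt{K}$. For smoothness, the gradient is constant in $\bm{x}$. Hence $\|\nabla\ell_t(\bm{x})-\nabla\ell_t(\bm{y})\| = 0 \le L\|\bm{x}-\bm{y}\|$ holds with $L=0$, so $\ell_t$ is $0$-smooth. Equivalently, the Hessian vanishes identically.

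For the expected loss, I will use linearity of expectation. Since $\bm{\ell}(\xi)\in[0,1]^K$ is bounded, the expectation exists, and $F_t(\bm{x})=\mathbb{E}_{\xi\sim D_{s_t}}[\langle\bm{x},\bm{\ell}(\xi)\rangle]=\langle\bm{x},\bm{\mu}_t\rangle$. Because $\bm{\mu}_t$ is an average of vectors in $[0,1]^K$, it also lies in $[0,1]^K$, as already noted in the paper. The same three arguments therefore apply verbatim with $\bm{\mu}_t$ in place of $\bm{\ell}_t$: $F_t$ is linear, $\|\bm{\mu}_t\|\le\sqrt{K}$, and $F_t$ is $0$-smooth.

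The only point needing care, and thus the main obstacle, is matching the notion of smoothness used by SEA, which is typically $L$-smoothness of $f(\cdot,\xi)$ in the Euclidean norm. Here $\xi$ enters only through the constant gradient $\bm{\ell}(\xi)$, so the $0$-smoothness holds uniformly in $\xi$. I would also remark that the gradient bound $\sqrt{K}$ is a uniform constant (``$G$'') independent of $t$ and $\xi$.
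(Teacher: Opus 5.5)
Your proposal is correct and follows essentially the same route as the paper's proof: linearity gives $0$-smoothness, the coordinatewise bound $L(\pi_i,\xi_t)\in[0,1]$ gives $\|\bm{\ell}_t\|\le\sqrt{K}$, and linearity of expectation transfers everything to $F_t(\bm{x})=\langle\bm{x},\bm{\mu}_t\rangle$. The only difference is that you prove $0$-smoothness directly via the constant gradient, where the paper simply cites SEA for it.
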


\begin{proof}
Since $f(\bm{x}, \xi) = \langle\bm{x},\bm{\ell}(\xi)\rangle$ is linear in $\bm{x}$, it is $0$-smooth~\cite{sachs2022between}. Because each coordinate satisfies $L(\pi_i,\xi_t)\in[0,1]$, we have $\|\bm{\ell}_t\| \le \sqrt{K}$. Taking expectations preserves linearity: $F_t(\bm{x})=\mathbb{E}_{\xi\sim D_{s_t}}[f(\bm{x},\xi)] =\langle\bm{x},\bm{\mu}_t\rangle$.
\end{proof}
\section{Multi-Regime Policy Selection for UAV Search Missions}
\label{sec:multiregime} 

As the UAV navigates, radar statistics evolve due to changing scenes. The policy selection task must exploit the best policy for the current scene while remaining robust to scene transitions. We model this as online convex optimization on the simplex $\Delta^K$ and employ the SEA framework, which captures both the stochastic variance $\bar{\sigma}_T^2$ (in-scene noise) and the adversarial variation $\bar{\Sigma}_T^2$ (inter-scene drift).

\subsection{Optimistic FTRL with Adaptive Learning Rate}
\label{sec:oftrl_algorithm}

We refer to the resulting selector as \textsc{SEArch}, an SEA-based online policy selector for UAV radar search. \textsc{SEArch} uses OFTRL~\cite{rakhlin2013predictable} with an adaptive learning rate~\cite{sachs2022between} that responds to observed loss variability. At round $t$, the algorithm forms an optimistic prediction $\bm{M}_t$ of the next loss, then selects the policy weights by minimizing a regularized cumulative loss~\cite[Section 2.1]{rakhlin2013predictable} (see also~\cite[Section 7.12]{orabona2019modern}):
\begin{equation}
\label{eq:oftrl_update}
\bm{x}_t = \argmin_{\bm{x}\in\Delta^K} \left\{ \left\langle\bm{x}, \bm{M}_t + \bm{L}_{t-1}\right\rangle + \frac{1}{\eta_t}\|\bm{x}\|^2 \right\},
\end{equation}
where $\bm{L}_{t-1} = \sum_{r=1}^{t-1} \bm{\ell}_r$ is the cumulative loss up to round $t-1$. The first term drives $\bm{x}_t$ toward policies with low cumulative loss. For instance, suppose the UAV has three policies: high-frequency radar ($\pi_1$), mid-frequency ($\pi_2$), and low-frequency ($\pi_3$). If recent observations strongly favor $\pi_1$ due to favorable clutter conditions, this term would drive $\bm{x}_t$ toward $[1, 0, 0]$. The regularization term $\frac{1}{\eta_t}\|\bm{x}\|^2$ (with time-varying learning rate $\eta_t$) prevents aggressive updates and ensures that $\bm{x}_t \in \Delta^K$.

Following the SEA framework, we set $\bm{M}_t = \bm{\ell}_{t-1}$ as the prediction of the next loss. Within a stochastic scene, $\bm{\ell}_{t-1}$ provides an unbiased estimate of the current mean $\bm{\mu}_t$ (since $\mathbb{E}[\bm{\ell}_{t-1}] = \bm{\mu}_t$ when $s_t = s_{t-1}$), yielding accurate predictions. At scene transitions, the prediction becomes inaccurate, but the adaptive learning rate mechanism (described next) automatically detects this and reduces the learning rate in response.

\subsubsection{Adaptive Learning Rate}
The SEA learning rate~\cite[Thm.~5]{sachs2022between} adapts based on prediction errors:
\begin{equation}
\label{eq:eta_adaptive}
\eta_t = \frac{D^2}{\nu + \sum_{r=1}^{t-1} \eta_r \cdot \|\bm{\ell}_r - \bm{M}_r\|^2},
\end{equation}
where $D$ is the diameter of the simplex $\Delta^K$ ($D = \sqrt{2}$ for the probability simplex) and $\nu > 0$ is a tuning parameter that ensures $\eta_t$ remains bounded. 

When predictions are accurate (stable scenes), the denominator grows slowly and $\eta_t$ remains large, enabling aggressive adaptation that exploits the stable structure. When losses are unpredictable (scene transitions or high noise), the denominator grows rapidly, $\eta_t$ decreases, and updates become conservative to avoid overreacting to transient fluctuations.

\subsubsection{Algorithm Summary}
\Cref{alg:oftrl_sea} summarizes the complete procedure. At each round, the UAV (1) forms an optimistic prediction using the previous loss (line 3), (2) computes the adaptive learning rate (line 4) and policy weights (line 5), (3) samples and executes a policy according to these weights (line 6), and (4) observes the full loss vector $\bm{\ell}_t$ (line 7). The algorithm requires no prior knowledge of the number of scenes $J$, the switch times $\{\tau^{(j)}\}$, or the variance/variation parameters $\bar{\sigma}_T^2, \bar{\Sigma}_T^2$: it adapts purely based on observed prediction errors.

\begin{algorithm}[t]
\caption{\textsc{SEArch}}
\label{alg:oftrl_sea}
\begin{algorithmic}[1]
\Require Policy set $\Pi = \{\pi_1, \ldots, \pi_K\}$, horizon $T$, tuning parameter $\nu > 0$, simplex diameter $D = \sqrt{2}$
\State \textbf{Initialize:} $\bm{L}_0 \gets \bm{0} \in \mathbb{R}^K$, $\bm{M}_1 \gets \bm{0}$, $\bm{\ell}_0 \gets \bm{0}$

\For{$t = 1, 2, \ldots, T$}
    \State Set optimistic prediction: $\bm{M}_t \gets \bm{\ell}_{t-1}$
    \State Compute adaptive learning rate via \Cref{eq:eta_adaptive}
    \State Compute policy weights via \Cref{eq:oftrl_update}
    \State Sample policy index $i \sim \bm{x}_t$ and execute $\pi_i$
    \State Observe full loss vector $\bm{\ell}_t \in [0,1]^K$
    \State Update cumulative loss: $\bm{L}_t \gets \bm{L}_{t-1} + \bm{\ell}_t$
\EndFor
\end{algorithmic}
\end{algorithm}

\subsection{Performance Guarantee}
\label{sec:performance_guarantee}

We now state the regret guarantee for \Cref{alg:oftrl_sea} in our UAV radar setting. The loss functions are linear (hence convex and $L$-smooth with $L=0$), with domain diameter $D = \sqrt{2}$ and gradient bound $G = \sqrt{K}$. Since $K$ is a fixed mission-level parameter, we treat $G$ as a constant and absorb it into the big-$O$ notation in what follows. By~\cite[Thm.~5]{sachs2022between}, \Cref{alg:oftrl_sea} with $\nu > 0$ achieves the following expected regret bound for any fixed comparator $\bm{x}^* \in \Delta^K$:
\begin{equation}
\label{eq:sea_regret_bound}
\mathbb{E}[\mathrm{Reg}_T] \leq O\left((\bar{\sigma}_T + \bar{\Sigma}_T)\sqrt{T}\right).
\end{equation}
For missions with $J$ scene transitions, we specialize this bound by relating $\bar{\Sigma}_T$ to the number of scene switches.

\begin{corollary}[Regret in Terms of Scene Transitions]
\label{cor:switch_regret}
Suppose the mission consists of $J$ contiguous stationary segments (\Cref{def:segments}), with mean loss vector $\bm{\mu}^{(j)}$ in segment $I^{(j)}$ as in \Cref{eq:mu_segment}. Then, under piecewise stationarity,
\begin{equation}
\label{eq:switch_bound}
\mathbb{E}[\mathrm{Reg}_T] \leq O\left(\bar{\sigma}_T \sqrt{T} + \sqrt{J}\right).
\end{equation}
\end{corollary}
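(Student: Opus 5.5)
The plan is to start from the SEA bound \Cref{eq:sea_regret_bound}, $\mathbb{E}[\mathrm{Reg}_T] \le O((\bar{\sigma}_T + \bar{\Sigma}_T)\sqrt{T})$, which we take as given (it is the specialization of~\cite[Thm.~5]{sachs2022between} to our linear, $0$-smooth losses with $D=\sqrt{2}$ and $G=\sqrt{K}$, as verified in \Cref{lem:linear_smooth}). The stochastic term $\bar{\sigma}_T\sqrt{T}$ is left as is. All the work is in showing that the drift term satisfies $\bar{\Sigma}_T\sqrt{T} = O(\sqrt{J})$ under piecewise stationarity.

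\textbf{Step 1: reduce the drift sum to transitions.} By \Cref{def:segments} and \Cref{eq:mu_segment}, $\bm{\mu}_t - \bm{\mu}_{t-1} = \bm{0}$ unless $t = \tau^{(j)}$ for some $j \ge 2$. Hence, as already noted in the interpretation paragraph,
\[
T\,\bar{\Sigma}_T^2 = \sum_{j=2}^{J} \|\bm{\mu}^{(j)} - \bm{\mu}^{(j-1)}\|^2 .
\]

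\textbf{Step 2: bound each jump uniformly.} Each $\bm{\mu}^{(j)}$ is an expectation of $\bm{\ell}(\xi)\in[0,1]^K$, so $\bm{\mu}^{(j)}\in[0,1]^K$. Every coordinate of the difference therefore lies in $[-1,1]$, which gives $\|\bm{\mu}^{(j)}-\bm{\mu}^{(j-1)}\|^2 \le K$. It follows that $T\bar{\Sigma}_T^2 \le K(J-1)$, that is, $\bar{\Sigma}_T \le \sqrt{K(J-1)/T}$. This is exactly the $O(\sqrt{J/T})$ scaling announced in the introduction.

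\textbf{Step 3: substitute.} Plugging Step 2 into the SEA bound gives $\bar{\Sigma}_T\sqrt{T} \le \sqrt{K}\sqrt{J-1}$. Since $K$ (equivalently $G=\sqrt{K}$) is absorbed into the big-$O$ as stated before the corollary, we obtain $\mathbb{E}[\mathrm{Reg}_T] \le O(\bar{\sigma}_T\sqrt{T} + \sqrt{J})$. The additive split $(\bar\sigma_T+\bar\Sigma_T)\sqrt T=\bar\sigma_T\sqrt T+\bar\Sigma_T\sqrt T$ needs no further justification.

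\textbf{Expected obstacle.} Steps 1 to 3 are elementary. The main care point is making sure the $O(\cdot)$ in \Cref{eq:sea_regret_bound} hides no additional dependence on $\bar{\Sigma}_T$ or $T$ beyond the displayed factor. For instance, the original theorem has lower-order terms in $D$, $G$, $L$ and $\nu$; with $L=0$ these reduce to constants such as $D^2/\nu$-type terms, which are dominated by $\sqrt{J}\ge 1$. I would check this first by recalling the explicit form of~\cite[Thm.~5]{sachs2022between} with $L=0$, and if a constant term appears, absorb it using $J\ge1$.

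A secondary point is that the comparator in $\mathrm{Reg}_T$ is a single fixed $\bm{x}^*$, so the corollary is a static-regret statement. No per-segment comparator is claimed here, and that limitation is what motivates the windowed variant that follows.
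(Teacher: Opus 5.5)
Your proposal is correct and matches the paper's proof: collapse $\bar{\Sigma}_T^2$ onto the $J-1$ transition times, bound each jump uniformly, and substitute into the SEA bound with $K$ absorbed into the constant. Your coordinatewise bound $\|\bm{\mu}^{(j)}-\bm{\mu}^{(j-1)}\|^2 \le K$ is slightly sharper than the paper's $4G^2 = 4K$ from the triangle inequality, but this only changes the hidden constant.
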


\begin{proof}
By \Cref{def:variance_variation} and piecewise stationarity (\Cref{def:segments}), the adversarial variation satisfies
\[
\bar{\Sigma}_T^2 = \frac{1}{T}\sum_{t=2}^T \|\bm{\mu}_t - \bm{\mu}_{t-1}\|^2.
\]
Since $\bm{\mu}_t = \bm{\mu}^{(j)}$ is constant within segment $I^{(j)}$, the sum collapses to the $J-1$ switch times $\{\tau^{(j)}\}_{j=2}^{J}$, giving
\[
\bar{\Sigma}_T^2 = \frac{1}{T}\sum_{j=2}^{J} \|\bm{\mu}^{(j)} - \bm{\mu}^{(j-1)}\|^2 \leq \frac{4 (J-1) G^2}{T},
\]
where we used $\|\bm{\mu}^{(j)} - \bm{\mu}^{(j-1)}\|^2 \leq 4G^2$ via the triangle inequality. Taking square roots, $\bar{\Sigma}_T \leq 2G\sqrt{(J-1)/T}$, so $\bar{\Sigma}_T\sqrt{T} \leq 2G\sqrt{J-1} = O(\sqrt{J})$ after absorbing $G$ into the constant. Substituting into \Cref{eq:sea_regret_bound} yields the bound.
\end{proof}

While this algorithm handles moderate regime changes, it accumulates regret over $[T]$. For missions with frequent transitions (as in Example~\ref{ex:rapid_switching}), this global accumulation is suboptimal: the algorithm remains influenced by irrelevant past scenes. The next section addresses this limitation.
\section{Windowed Policy Selection for Rapidly Changing Missions}
\label{sec:windowed} 

Missions with frequent scene transitions require algorithms that adapt within each segment rather than relying on full historical information. \Cref{alg:oftrl_sea} achieves the regret bound in \Cref{eq:sea_regret_bound}, which scales with $\bar{\sigma}_T$ and $\bar{\Sigma}_T$ over the full horizon. This is suitable for missions with infrequent transitions ($J = O(1)$) or gradual drift ($J = O(\sqrt{T})$), but is suboptimal under rapid switching because the algorithm accumulates losses from irrelevant past scenes. We therefore introduce a \emph{windowed} OFTRL variant that periodically restarts both the loss accumulation in \Cref{eq:oftrl_update} and the adaptive learning rate in \Cref{eq:eta_adaptive}.

\subsection{Windowed OFTRL Algorithm}
\label{sec:windowed_oftrl} 

The windowed algorithm performs a restart every $w$ rounds, where $w$ is a user-specified window length. Let $b_t = \lfloor (t-1)/w \rfloor \cdot w + 1$ denote the start of the current window containing round $t$. For instance, with $w=10$, rounds $t=1,\ldots,10$ have $b_t=1$, rounds $t=11,\ldots,20$ have $b_t=11$, and so on. At each round $t$, the algorithm solves
\begin{equation}
\label{eq:ftrl_window_update}
\bm{x}_t
=
\argmin_{\bm{x}\in\Delta^K}
\left\{
\left\langle\bm{x},\,
\bm{M}_t +
\sum_{r=b_t}^{t-1} \bm{\ell}_r
\right\rangle
+
\frac{1}{\eta_t}\|\bm{x}\|^2
\right\},
\end{equation}
where the cumulative loss is restricted to the current window $[b_t, t-1]$. Similarly, the learning rate is computed using only losses within the current window:
\begin{equation}
\label{eq:eta_window}
\eta_t = \frac{D^2}{\nu + \sum_{r=b_t}^{t-1} \eta_r \cdot  \|\bm{\ell}_r - \bm{M}_r\|^2}.
\end{equation}
At $t = k \cdot w + 1$ for $k = 1, 2, \ldots$, the algorithm performs a restart: all accumulated losses and predictions are discarded, and optimization resumes with $\bm{M}_t = \bm{0}$ and an empty cumulative loss.

\subsubsection{Window-Size Selection}
$w$ trades off stability and adaptivity. Large windows ($w \approx T$) recover the global algorithm in \Cref{alg:oftrl_sea}, yielding low regret on long segments but slow adaptation. Small windows ($w \approx \log T$) enable rapid adaptation but may exhibit high variance within segments. In practice, $w$ should be chosen based on typical segment lengths.

\begin{algorithm}[t]
\caption{\textsc{W-SEArch}}
\label{alg:windowed_oftrl}
\begin{algorithmic}[1]
\Require Policy set $\Pi = \{\pi_1, \ldots, \pi_K\}$, horizon $T$, window length $w$, tuning parameter $\nu > 0$, simplex diameter $D = \sqrt{2}$
\State \textbf{Initialize:} $\bm{M}_1 \gets \bm{0}$, $\bm{\ell}_0 \gets \bm{0}$, $b_1 \gets 1$
\For{$t = 1, 2, \ldots, T$}
    \If{$(t-1) \bmod w = 0$ and $t > 1$}
        \State \textbf{Hard restart:} $\bm{M}_t \gets \bm{0}$, $b_t \gets t$
    \Else
        \State $b_t \gets b_{t-1}$
    \EndIf
    \State Set optimistic prediction: $\bm{M}_t \gets \bm{\ell}_{t-1}$
    \State Compute adaptive learning rate via \Cref{eq:eta_window}
    \State Compute policy weights via \Cref{eq:ftrl_window_update}
    \State Sample policy index $i \sim \bm{x}_t$ and execute $\pi_i$
    \State Observe full loss vector $\bm{\ell}_t \in [0,1]^K$
\EndFor
\end{algorithmic}
\end{algorithm}

\subsection{Regret Guarantee}
\label{sec:windowed_guarantee}

We now state the adaptive regret guarantee for \Cref{alg:windowed_oftrl}. Recall from \Cref{sec:problem_formulation} that adaptive regret measures performance over all contiguous intervals of a given length, allowing the comparator to change across windows.

\begin{theorem}[Windowed Adaptive Regret]
\label{thm:windowed_regret}
Let $w \leq T$ be the window length. For any contiguous interval $I = [s, s+w-1] \subseteq [T]$ with $|I| = w$, \Cref{alg:windowed_oftrl} with $D = \sqrt{2}$, $G = \sqrt{K}$, and $L = 0$ achieves
\begin{equation}
\label{eq:windowed_regret}
\mathbb{E}[\mathrm{Reg}_I] \leq O\left(\bar{\sigma}_I \sqrt{w} + \sqrt{J_I}\right),
\end{equation}
where $\bar{\sigma}_I$ is the stochastic standard deviation restricted to $I$ (\Cref{def:variance_variation} with summation over $t \in I$), and $J_I$ is the number of scene transitions within $I$. If at most one scene transition occurs per window, i.e., $J_I \leq 1$ for all $I$, the per-interval bound simplifies to
\begin{equation}
\label{eq:windowed_regret_simplified}
\mathbb{E}[\mathrm{Reg}_I] \leq O\left(\bar{\sigma}_I \sqrt{w}\right).
\end{equation}
\end{theorem}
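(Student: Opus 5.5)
The argument reduces the interval bound to the global SEA guarantee \Cref{eq:sea_regret_bound}, applied separately to each restart block that $I$ touches. Fix $I=[s,s+w-1]$. Since restarts occur at the grid points $kw+1$ and $|I|=w$, the interval $I$ meets at most two consecutive blocks, $B_1=[b,b+w-1]$ and $B_2=[b+w,b+2w-1]$. It therefore splits into a suffix $I_1=I\cap B_1$ and a prefix $I_2=I\cap B_2$, with $|I_1|+|I_2|=w$. When $s$ is a grid point, $I_2$ is empty and only one block is involved.

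Inside each block, \Cref{alg:windowed_oftrl} runs \Cref{alg:oftrl_sea} from scratch, with $\bm{M}=\bm{0}$, an empty cumulative loss, and the learning rate \Cref{eq:eta_window} reset. Consequently \cite[Thm.~5]{sachs2022between} applies verbatim to the block, using $D=\sqrt2$, $G=\sqrt K$ and $L=0$ (\Cref{lem:linear_smooth}). The comparator $\bm{x}^*$ used for $I$ is fixed on $I$, so it is also a fixed comparator on each piece.

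The first step is to pass from the regret of a piece to the regret of its block. For $I_2$ this is immediate, since $I_2$ is a prefix of $B_2$ and the SEA bound holds at every horizon $t$ of the restarted run. The piece $I_1$ is a suffix, so the regret of the block's run on $I_1$ does not obviously follow from a bound on the whole block. I would handle it by reopening the OFTRL analysis with a comparator that is $\bm{x}_t$-free on $[b,s-1]$. The cleaner route is to note that the standard optimistic-FTRL regret lemma (cf.\ \cite[Section 7.12]{orabona2019modern}) bounds the regret on any subinterval of a run by the stability terms $\sum_t \eta_t\|\bm{\ell}_t-\bm{M}_t\|^2$ summed over that subinterval, plus a $D^2/\eta$ term. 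Because $\eta_t$ is nonincreasing, $D^2/\eta_{\mathrm{end}}$ is controlled by the same stability sum, as in the SEA proof.

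The second step bounds these stability terms in expectation. Write $\bm{\ell}_t-\bm{\ell}_{t-1}=(\bm{\ell}_t-\bm{\mu}_t)+(\bm{\mu}_t-\bm{\mu}_{t-1})+(\bm{\mu}_{t-1}-\bm{\ell}_{t-1})$ and apply Jensen's inequality to the adaptive sum. This gives, for each piece, a bound of order
\[
\sqrt{\textstyle\sum_{t\in I_k}\mathbb{E}\|\bm{\ell}_t-\bm{\mu}_t\|^2}+\sqrt{\textstyle\sum_{t\in I_k}\|\bm{\mu}_t-\bm{\mu}_{t-1}\|^2}+G,
\]
where the final $G$ accounts for the first-round misprediction $\bm{M}=\bm{0}$ after a restart. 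The drift sum over $I$ is nonzero only at the $J_I$ transitions, and each term is at most $4G^2$, exactly as in the proof of \Cref{cor:switch_regret}. Hence it contributes $O(\sqrt{J_I})$.

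Finally, add the two pieces and use $\sqrt a+\sqrt b\le\sqrt{2(a+b)}$ to merge the noise terms into $\sqrt{w\,\bar\sigma_I^2}=\bar\sigma_I\sqrt w$. The constant restart penalties $O(G)$ are absorbed, like $G$ itself, into the $O(\cdot)$ notation, which yields \Cref{eq:windowed_regret}. When $J_I\le1$, the drift contribution is $O(1)$ and the bound reduces to \Cref{eq:windowed_regret_simplified}.

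This last reduction is the delicate point. Read literally, it requires treating additive constants as lower order relative to $\bar\sigma_I\sqrt w$, which is only justified when $\bar\sigma_I\sqrt w=\Omega(1)$. I would state this explicitly and assume it, or else keep the additive $O(1)$ term. The other main obstacle is the suffix argument of the first step, namely showing that OFTRL with accumulated in-block history still attains the stated regret on the tail of a block. If the general subinterval lemma does not go through cleanly, I would fall back on the $D^2/\eta$-plus-stability decomposition over the whole block and bound it by the block's stochastic and drift quantities. This costs replacing $\bar\sigma_I$ by the noise level of the covering blocks, which is comparable under the piecewise-stationary model.
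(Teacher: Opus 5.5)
Your reduction is the same as the paper's: split $I$ across the at most two aligned windows and invoke the fresh-run SEA bound in each. The paper closes it by writing $\mathbb{E}[\mathrm{Reg}_I]\le\mathbb{E}[\mathrm{Reg}_{W_k}]+\mathbb{E}[\mathrm{Reg}_{W_{k+1}}]$ and then trading $\bar\sigma_{W_k},J_{W_k}$ for $\bar\sigma_I,J_I$. You are right that the suffix piece $I_1$ is where the real content lies, but neither of your repairs works.

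On $[s,e]$, a run restarted at $b$ is OFTRL started at $s$ with regularizer $\frac{1}{\eta_t}\|\bm{x}\|^2+\langle\sum_{r=b}^{s-1}\bm{\ell}_r,\bm{x}\rangle$. The subinterval version of the OFTRL lemma therefore carries the extra term $\langle\sum_{r=b}^{s-1}\bm{\ell}_r,\bm{u}-\bm{x}_s\rangle$. This term is of order $s-b$ when the comparator $\bm{u}$ that is good on $I$ was bad before $s$. So there is no subinterval lemma with only local stability terms plus $D^2/\eta$. That locality belongs to greedy mirror-descent updates, whose analysis uses only $\|\bm{x}_s-\bm{u}\|^2\le D^2$, not to lazy FTRL; its absence is exactly the inertia the paper describes in its introduction. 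Your fallback fails for the same reason. Write $\mathrm{Reg}_A(\bm{u})=\sum_{t\in A}\langle\bm{\ell}_t,\bm{x}_t-\bm{u}\rangle$; then $\mathrm{Reg}_{I_1}(\bm{u})=\mathrm{Reg}_{B_1}(\bm{u})-\mathrm{Reg}_{[b,s-1]}(\bm{u})$, and the last term can be $-\Theta(s-b)$. A small block regret thus gives no control on the suffix.

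The gap cannot be closed, because the statement is false for intervals that straddle a restart. Here is a counterexample.
\begin{itemize}
\item \emph{Setup.} Take $K=2$, deterministic losses, $w$ even, a block $[b,b+w-1]$, $s=b+w/2$, and loss vector $(0,1)$ for $t\le s$ and $(1,0)$ for $t>s$. Then $J_I=1$ and $\bar\sigma_I=0$, so the claimed bound is $O(1)$.
\item \emph{Learning rate stays large.} In the windowed learning rate, the denominator increases only at $r=b$ and $r=s+1$, each time by $\eta_r\|\bm{\ell}_r-\bm{M}_r\|^2\le 4$. Hence $\eta_t\ge 2/9$ on the whole block for $\nu=1$, and $\eta_t$ is bounded below by a constant for any fixed $\nu$.
\item \emph{Iterates stick to $\pi_1$.} For $t\in[s+2,b+w-1]$, the in-block score $\bm{M}_t+\sum_{r=b}^{t-1}\bm{\ell}_r$ equals $(t-s,\,w/2+1)$. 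For two arms the minimizer puts mass $\min\{1,\max\{0,\tfrac12+\eta_t d_t/4\}\}$ on $\pi_1$, with $d_t=w/2+1-(t-s)$. Hence $\bm{x}_t=(1,0)$ whenever $t-s\le w/2-8$.
\item \emph{Regret.} The algorithm pays loss $1$ on $w/2-9$ rounds of $I$, while $\pi_2$ pays $1$ in total on $I$, so $\mathrm{Reg}_I\ge w/2-10$. Small bounded noise keeps this linear in $w$, against a bound of $O(\bar\sigma_I\sqrt w+1)$.
\end{itemize}

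What survives is the aligned case $I=W_k$. There your prefix observation (the SEA bound is anytime for a fresh run) gives $O(\bar\sigma_{W_k}\sqrt w+\sqrt{J_{W_k}}+1)$. Your caveat that the additive constant can be dropped only when $\bar\sigma\sqrt w=\Omega(1)$ is also correct. A guarantee on arbitrary length-$w$ intervals would need a different within-block update, not a sharper analysis of this one.
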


\begin{proof}
Let $W_k = [kw+1, (k+1)w]$ denote the $k$-th aligned window, at which \Cref{alg:windowed_oftrl} performs a restart. Within each $W_k$, the algorithm reduces to a fresh instance of \Cref{alg:oftrl_sea} applied to $w$ rounds with empty history. Applying \Cref{cor:switch_regret} with horizon $w$ and $J_{W_k}$ transitions yields
\[
\mathbb{E}[\mathrm{Reg}_{W_k}] \leq O\left(\bar{\sigma}_{W_k}\sqrt{w} + \sqrt{J_{W_k}}\right).
\]

Now consider any contiguous interval $I \subseteq [T]$ of length $w$. Since $|I| = w$ and the aligned windows also have length $w$, $I$ intersects at most two consecutive aligned windows $W_k$ and $W_{k+1}$. Applying the per-window bound to each and summing,
\[
\mathbb{E}[\mathrm{Reg}_I] \leq \mathbb{E}[\mathrm{Reg}_{W_k}] + \mathbb{E}[\mathrm{Reg}_{W_{k+1}}] 
\leq O\left(\bar{\sigma}_I\sqrt{w} + \sqrt{J_I}\right),
\]
where the absolute constant induced by summing two aligned-window bounds is absorbed into the big-$O$. Substituting $J_I \leq 1$ yields \Cref{eq:windowed_regret_simplified}. By contrast, the global bound in \Cref{eq:switch_bound} accumulates regret over the full horizon with total $J \leq T/w$ transitions, giving $O(\bar{\sigma}_T\sqrt{T} + \sqrt{T/w})$, whose switching penalty grows with $\sqrt{T/w}$. The windowed bound pays only a constant switching penalty per window and is therefore strictly tighter whenever $w \ll T$.
\end{proof}

\subsection{Computational and Memory Footprint at the Edge}
\label{sec:footprint}

Both \textsc{SEArch} and \textsc{W-SEArch} are deliberately lightweight, with no online training and no accelerator. Each round forms the prediction $\bm{M}_t \gets \bm{\ell}_{t-1}$ and the adaptive rate in $O(K)$, and solves \Cref{eq:oftrl_update} (resp.\ \Cref{eq:ftrl_window_update}) by a simplex projection in $O(K\log K)$, dominated by the policy count $K$ ($K \leq 16$ here). The state is only $O(K)$ and independent of the horizon $T$, since \Cref{alg:windowed_oftrl} keeps running within-window sums rather than buffering the last $w$ losses; windowing thus removes \emph{stale} information rather than reducing the byte count. Finally, by the full-information assumption (A1), a single measurement $\xi_t$ evaluates all $K$ policies locally, so the online layer adds no extra sensing or air-interface cost beyond detection.
\section{Performance Evaluation}
\label{sec:performance}

We evaluate the proposed algorithms through experiments across various scenarios and validate the theoretical regret guarantees established in \Cref{sec:multiregime,sec:windowed}.

\subsection{Experimental Setup}

We simulate the online policy selection problem with $K = 4$. \Cref{tab:policy_config} lists the mean loss of each policy across operating conditions. We set the horizon $T = 150$ with piecewise-stationary losses as in \Cref{def:segments}. Within each segment, losses are drawn from $\mathcal{N}(\bm{\mu}_t, \sigma^2 \mathbf{I})$, where $\bm{\mu}_t \in [0,1]^K$ is the segment mean and $\sigma = 0.15$. To model realistic operational transitions, policy switches occur gradually over a 10-step interpolation period. The simplex diameter is $D = \sqrt{2}$ and the gradient bound is $G = \sqrt{K} = 2$. We implement OFTRL (\Cref{alg:oftrl_sea}) with the SEA learning rate and its windowed variant (\Cref{alg:windowed_oftrl}) with $w = 30$, using $\nu = 1$ and averaging over $10$ runs. We compare against UCB~\cite{auer2002finite} with $c = 2.0$, Exp3~\cite{auer2002nonstochastic} with $\eta = 0.1$, and OMD with $\eta_0 = 0.5$.

\begin{table}[t]
\centering
\caption{Policy loss profiles across operating conditions.}
\label{tab:policy_config}
\small
\begin{tabular}{@{}lcccc@{}}
\toprule
\textbf{Policy} & \textbf{Scene 1} & \textbf{Scene 2} & \textbf{Scene 3} & \textbf{Scene 4} \\
\midrule
$\pi_1$ & 0.10 & 0.30 & 0.40 & 0.35 \\
$\pi_2$ & 0.33 & 0.12 & 0.38 & 0.40 \\
$\pi_3$ & 0.40 & 0.35 & 0.15 & 0.30 \\
$\pi_4$ & 0.35 & 0.40 & 0.30 & 0.15 \\
\bottomrule
\end{tabular}
\end{table}
\begin{figure*}
    \centering
    \includegraphics[width=\textwidth]{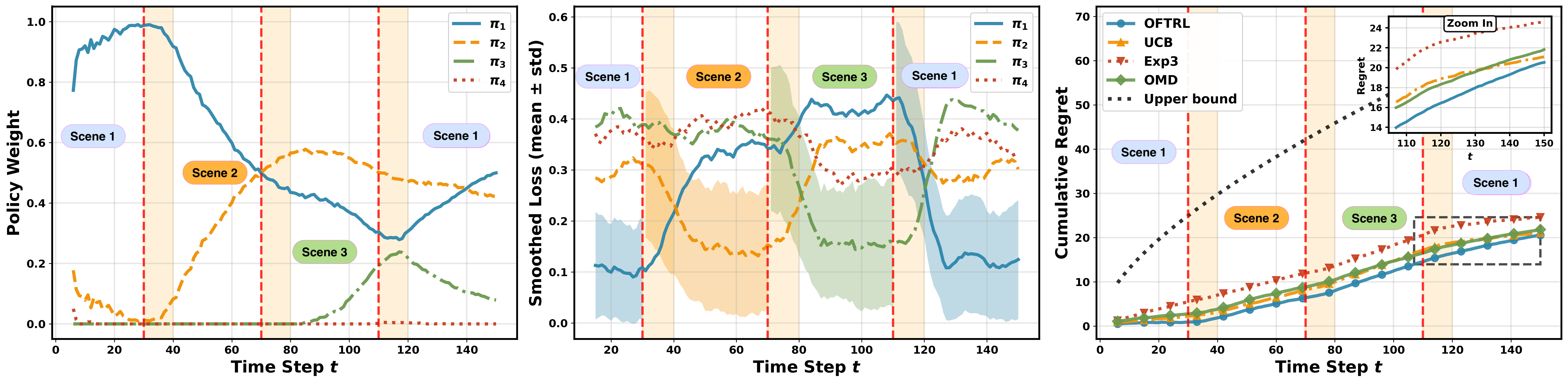}
    \caption{Performance comparison with 3 distribution shifts. Following the scenario in \Cref{fig:two_policy_scene_shift}, a gradual drift from one scene to the next is allowed (yellow shaded regions) after each regime switch (red lines). The shaded bands in Panel (b) show the variance of the optimal policy $\pi^*$ during each scene. The experiment demonstrates OFTRL's reallocation of policy weight through transitions: when $\pi_1$ is optimal during Scene 1 (maintaining the minimum mean loss of $0.1$), OFTRL concentrates weight on $\pi_1$ as shown in Panel (a). This adaptation is validated by the regret measurement in Panel (c), which remains sublinear over the horizon $T$ and stays below the theoretical upper bound in \Cref{eq:sea_regret_bound}.}
    \label{fig:exp1_sota}
\end{figure*}

\subsection{Comparison with State-of-the-Art Methods}

We evaluate OFTRL on a scenario with $4$ segments, where $\pi^*$ cycles through $\pi_1 \to \pi_2 \to \pi_3 \to \pi_1$ with shifts at $t = 30, 70, 110$. \Cref{fig:exp1_sota} shows that OFTRL concentrates weight on the optimal policy within $10$ steps of each shift, adapting smoothly during the gradual transition periods. This rapid adaptation is driven by OFTRL's adaptive learning rate and optimistic predictions, which together provide a proactive response to regime changes without manual tuning. The algorithm achieves sublinear regret below the theoretical bound from \Cref{cor:switch_regret}, with a $2\%$ improvement over UCB, $17\%$ over Exp3, and $6\%$ over OMD.

\subsection{Adaptation to Different Operating Regimes}

\begin{figure*}
    \centering
    \begin{subfigure}[b]{\textwidth}
        \includegraphics[width=\textwidth]{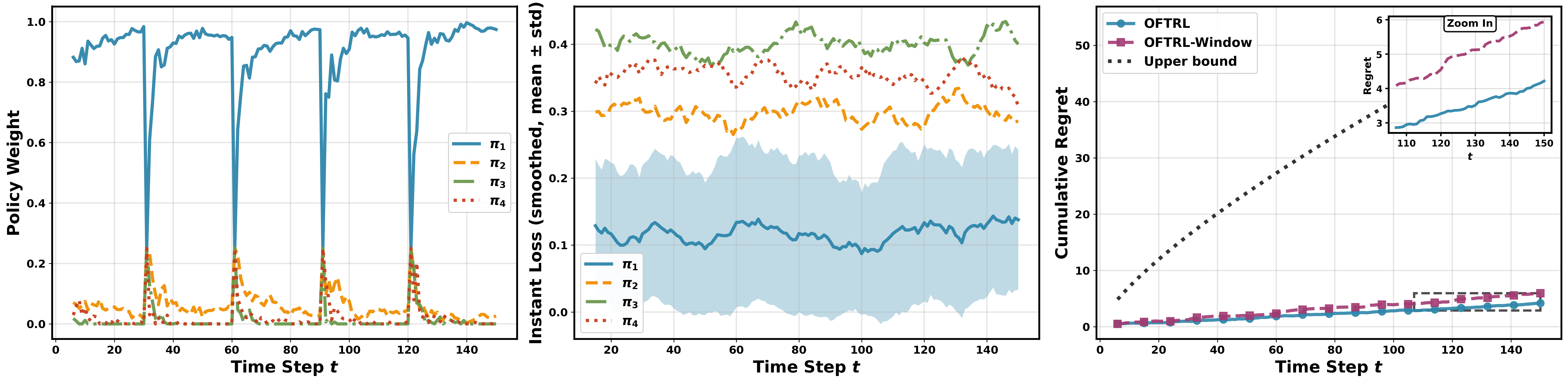}
        \caption{Single-regime dominance ($J = 0$).}
        \label{fig:exp2_single}
    \end{subfigure}
    \hfill
    \begin{subfigure}[b]{\textwidth}
        \includegraphics[width=\textwidth]{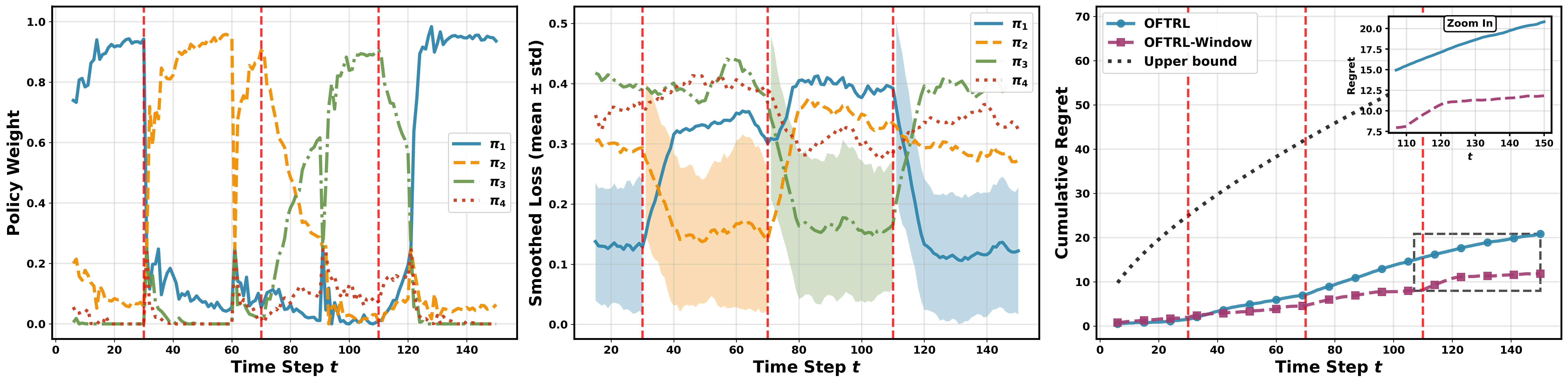}
        \caption{Rapid regime switching ($J = 3$).}
        \label{fig:exp2_three}
    \end{subfigure}
    \hfill
    \begin{subfigure}[b]{\textwidth}
        \includegraphics[width=\textwidth]{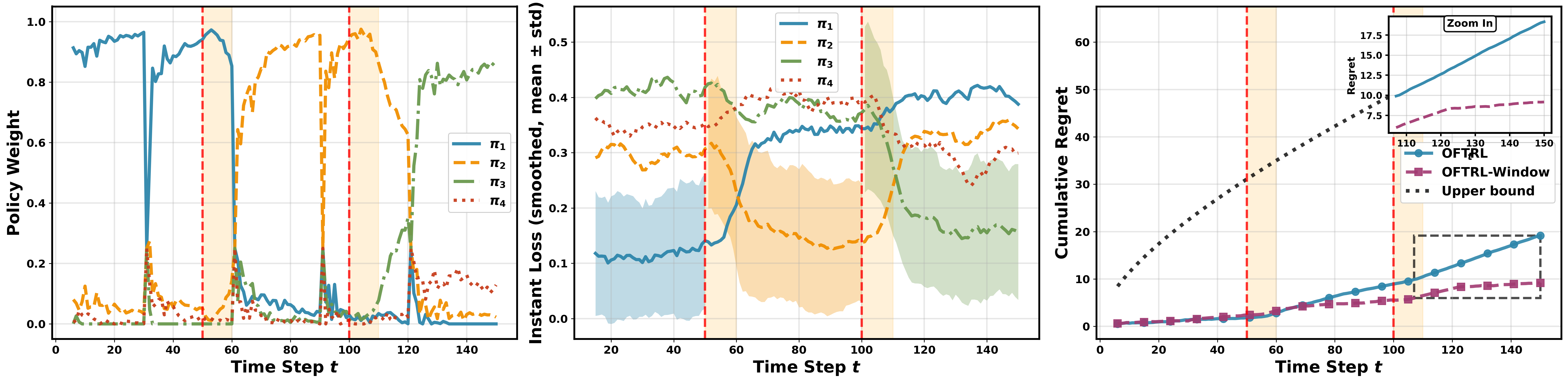}
        \caption{Gradual regime drift ($J = 2$).}
        \label{fig:exp2_gradual}
    \end{subfigure}
    \caption{Performance on different operating regimes. The windowed algorithm adapts faster in non-stationary regimes, while both algorithms achieve comparable performance in stable conditions.}
    \label{fig:exp2_system}
\end{figure*}

We next validate the algorithms on the different regimes described in \Cref{sec:regime_examples}. \Cref{fig:exp2_single} shows that both algorithms achieve sublinear regret under a stationary horizon, with comparable performance in stable environments. \Cref{fig:exp2_three} demonstrates the windowed algorithm's advantage through faster policy reallocation at regime boundaries, achieving a $60\%$ reduction in final regret compared to OFTRL. \Cref{fig:exp2_gradual} shows that the windowed variant successfully tracks the slowly varying optimal policy, yielding a $72\%$ reduction in regret compared to OFTRL.

\subsection{Robustness to Varying Non-Stationarity}

\begin{figure*}
    \centering
    \begin{subfigure}[b]{\textwidth}
        \includegraphics[width=\textwidth]{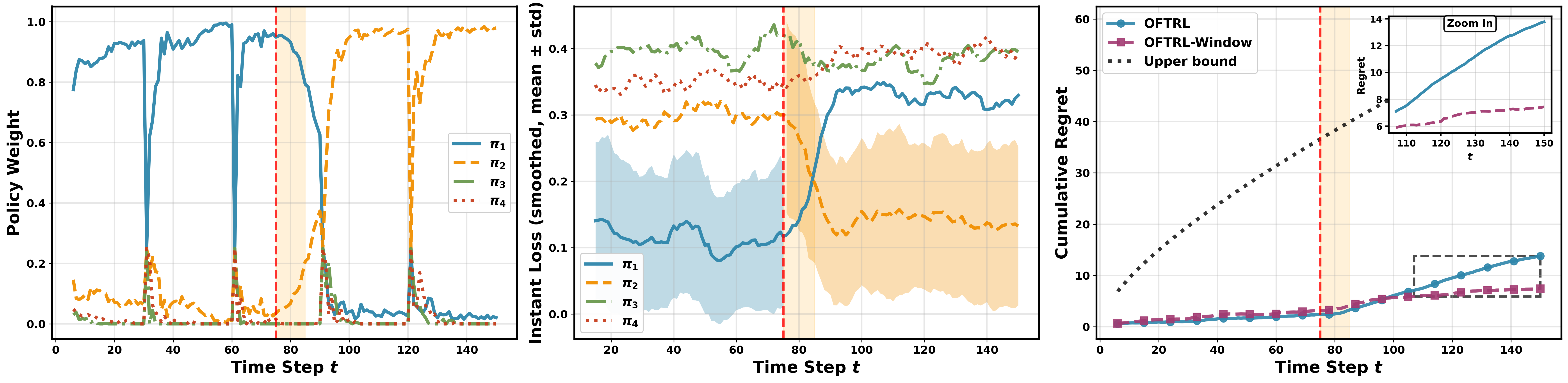}
        \caption{Single transition ($J = 1$).}
        \label{fig:exp3_one}
    \end{subfigure}
    \hfill
    \begin{subfigure}[b]{\textwidth}
        \includegraphics[width=\textwidth]{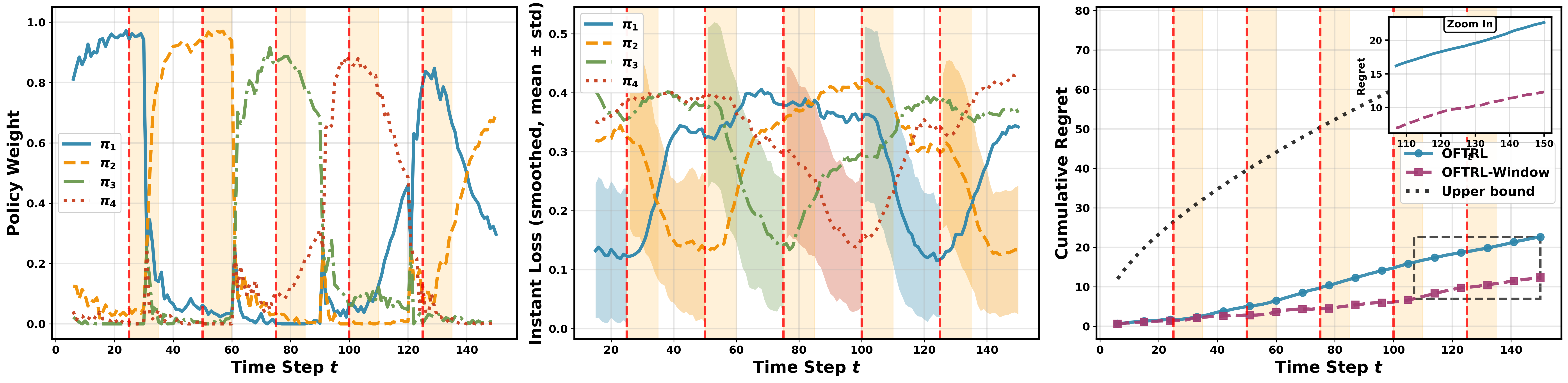}
        \caption{Five transitions ($J = 5$).}
        \label{fig:exp3_five}
    \end{subfigure}
    \hfill
    \begin{subfigure}[b]{\textwidth}
        \includegraphics[width=\textwidth]{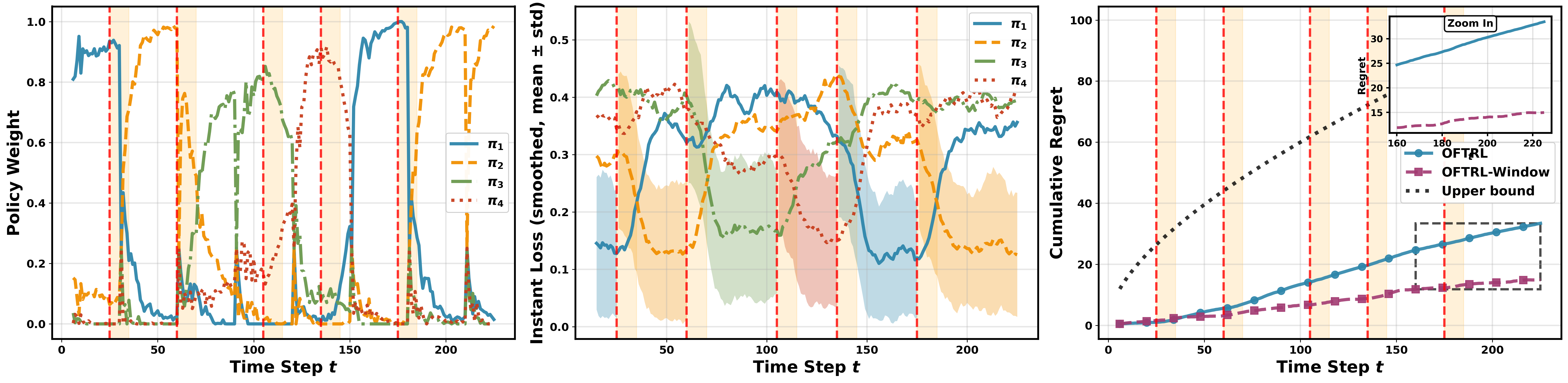}
        \caption{Five transitions with inconsistent durations ($J = 5$).}
        \label{fig:exp3_five_inconsistent}
    \end{subfigure}
    \caption{Performance under varying numbers of scene transitions. The windowed algorithm maintains substantially lower regret as the number of transitions increases, consistent with the theoretical $O(\sqrt{J})$ penalty from \Cref{cor:switch_regret}. Even with inconsistent segment durations ranging from $25$ to $50$ steps, the windowed approach adapts effectively.}
    \label{fig:exp3_switches}
\end{figure*}

\Cref{fig:exp3_switches} shows the windowed algorithm's robustness to transition frequency. With one shift (\Cref{fig:exp3_one}), the windowed algorithm achieves a $72\%$ reduction in regret compared to OFTRL. With five transitions (\Cref{fig:exp3_five}), it achieves a $57\%$ reduction. Importantly, \Cref{fig:exp3_five_inconsistent} demonstrates robustness to irregular switching: with $5$ transitions spaced over inconsistent intervals, the windowed algorithm achieves a $70\%$ reduction in regret compared to OFTRL. This validates that the windowed approach adapts effectively even when segment durations are unpredictable, as the restart mechanism operates independently of the underlying regime structure.

\begin{figure*}[t!]
    \centering
    \begin{subfigure}[b]{\textwidth}
        \includegraphics[width=\textwidth]{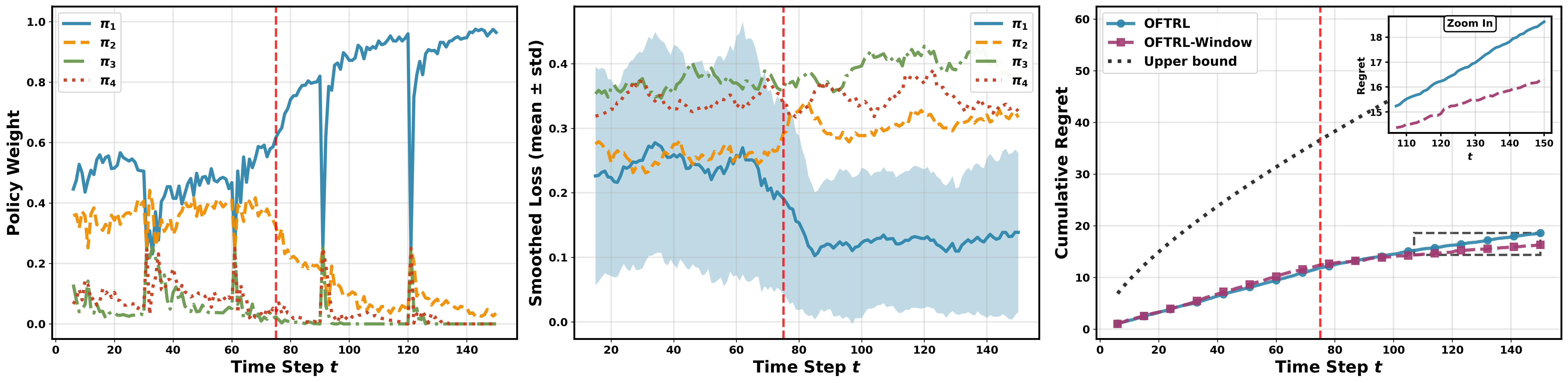}
        \caption{Non-stationary to stable transition.}
        \label{fig:exp3_hybrid_v2s}
    \end{subfigure}
    \caption{Performance under hybrid environments transitioning from non-stationary to stable conditions. The windowed algorithm demonstrates rapid adaptation when conditions stabilize.}
    \label{fig:exp3_nonstationarity}
\end{figure*}

\Cref{fig:exp3_nonstationarity} illustrates how limited memory affects regime transitions. When conditions shift to a stable regime (\Cref{fig:exp3_hybrid_v2s}), windowed OFTRL achieves a $6\%$ improvement in regret. This confirms that limited memory is beneficial for transitions \emph{into} predictable environments, where discarding noisy history enables rapid learning.

\subsection{Ablation Studies}

\begin{figure*}
    \centering
    \includegraphics[width=\textwidth]{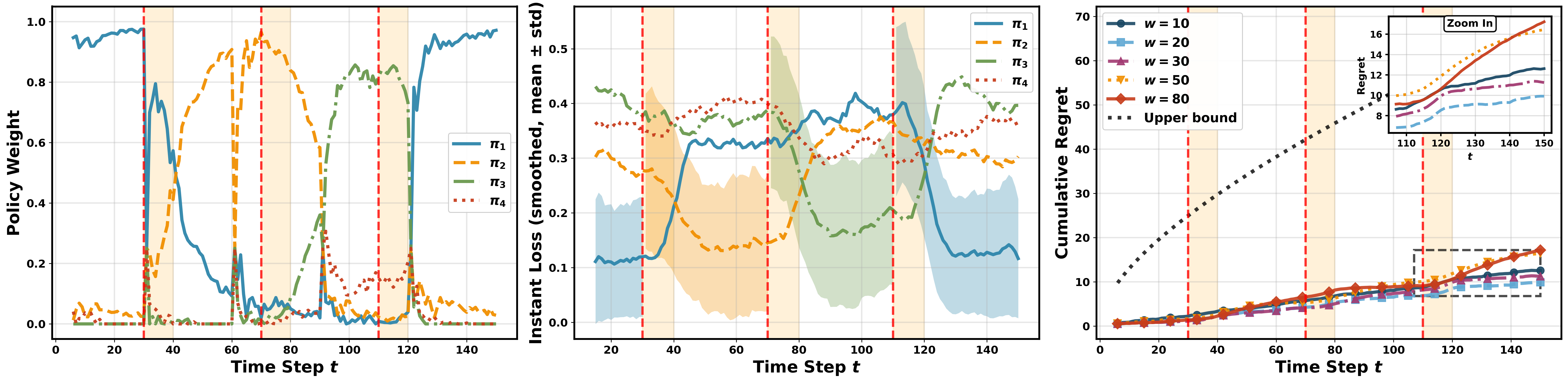}
    \caption{Ablation study on window size ($w \in \{10, 20, 30, 50, 80\}$). A window of $w=20$ provides the best balance for this $3$-switch scenario.}
    \label{fig:exp4_window}
\end{figure*}

\Cref{fig:exp4_window} shows the impact of window size in the $3$-switch scenario. Smaller windows adapt quickly but exhibit higher variance, whereas the optimal window size balances adaptation and stability. Medium windows show comparable performance, while larger windows provide excessive stability at the cost of slower adaptation. We also examine performance across segment durations with fixed $w = 30$, evaluating scenarios with segment lengths of $15$, $30$, and $50$ steps ($T = 150$, policy cycling through $\{\pi_1, \pi_2, \pi_3, \pi_4\}$).

\begin{table}[t]
\centering
\caption{Impact of segment duration on cumulative regret with $w = 30$. The windowed algorithm performs best when the segment length matches or exceeds the window size.}
\label{tab:ablation_regret}
\small
\begin{tabular}{@{}lccc@{}}
\toprule
\textbf{Segment config.} & \textbf{Length (steps)} & \textbf{Windowed regret} & \textbf{Switches ($J$)} \\
\midrule
Short & 15 & $14.96 \pm 0.83$ & 9 \\
Medium & 30 & $\phantom{0}6.46 \pm 1.26$ & 4 \\
Long & 50 & $\phantom{0}5.57 \pm 1.11$ & 2 \\
\bottomrule
\end{tabular}
\end{table}

\Cref{tab:ablation_regret} reveals a clear relationship between segment duration and performance. When segments ($15$ steps) are shorter than the window size ($30$ steps), frequent restarts limit the amount of learning within each segment, resulting in a regret of $14.96$ across $9$ switches. When the segment length matches the window size ($30$ steps), performance improves dramatically to a regret of $6.46$ with $4$ switches. When segments exceed the window size ($50$ steps), regret further decreases to $5.57$ with only $2$ switches. This validates a simple design rule: \emph{set $w$ to match the minimum expected segment duration}.

We next evaluate $4$ noise levels $\sigma \in \{0.05, 0.1, 0.2, 0.4\}$ on the $3$-transition scenario.

\begin{table}[t]
\centering
\caption{Noise sensitivity: cumulative regret across noise levels. All algorithms degrade with increased noise, but the windowed approach maintains its advantage, especially at lower noise levels.}
\label{tab:noise_sensitivity}
\small
\begin{tabular}{@{}lccccc@{}}
\toprule
\multirow{2}{*}{$\sigma$} & \multicolumn{4}{c}{\textbf{Regret $\mathrm{Reg}_T$}} & \textbf{Window} \\
\cmidrule(lr){2-5}
& \textbf{OFTRL} & \textbf{Window} & \textbf{UCB} & \textbf{Exp3} & \textbf{Improv. (\%)} \\
\midrule
0.05 & 18.82 & 10.99 & 19.55 & 22.49 & $+41.6$ \\
0.10 & 19.15 & 11.33 & 19.68 & 22.53 & $+40.8$ \\
0.20 & 21.19 & 16.04 & 26.18 & 25.64 & $+24.3$ \\
0.40 & 31.12 & 27.20 & 36.88 & 31.64 & $+12.6$ \\
\bottomrule
\end{tabular}
\end{table}

\Cref{tab:noise_sensitivity} shows that all algorithms degrade as noise increases (regret grows from $11$--$23$ at $\sigma = 0.05$ to $27$--$37$ at $\sigma = 0.4$), but windowed OFTRL consistently achieves the lowest regret. The windowed advantage diminishes with noise (from $41.6\%$ at $\sigma = 0.05$ to $12.6\%$ at $\sigma = 0.4$), reflecting the fact that higher noise requires more observations per segment, which reduces the benefit of discarding history.

\textbf{Number of Policies $K$.}
We evaluate scalability with $K \in \{4, 8, 16\}$ in the three-switch scenario. \Cref{tab:ablation_k_policies_regret} shows performance across policy set sizes.

\begin{table}[t]
\centering
\caption{Effect of the number of policies $K$ on final regret ($T=150$, $J=3$). Both algorithms scale well, with windowed OFTRL maintaining an advantage as the action space grows.}
\label{tab:ablation_k_policies_regret}
\small
\begin{tabular}{@{}lccc@{}}
\toprule
$K$ & \textbf{$\mathrm{Reg}_T$ (OFTRL)} & \textbf{$\mathrm{Reg}_T$ (Window)} & \textbf{Upper bound} \\
\midrule
4  & $2.31 \pm 0.85$ & $3.66 \pm 1.20$ & 68.84 \\
8  & $5.17 \pm 0.56$ & $5.00 \pm 0.57$ & 77.18 \\
16 & $8.92 \pm 0.48$ & $6.26 \pm 0.74$ & 87.63 \\
\bottomrule
\end{tabular}
\end{table}

Both algorithms maintain robust performance as $K$ increases. For $K=4$, OFTRL achieves lower regret, which suggests that with few policies the overhead of frequent restarts outweighs the adaptation benefits. However, as the action space grows to $K=8$ and $K=16$, the windowed variant achieves superior performance.
\section{Conclusion}
\label{sec:conclusion}
We presented a framework for adaptive policy selection in UAV-based search under non-stationary radar conditions. Using the SEA framework, which accounts jointly for in-scene noise and inter-scene shifts, we developed two algorithms. The first, \textsc{SEArch}, is an optimistic FTRL selector achieving regret $O(\bar{\sigma}_T \sqrt{T} + \sqrt{J})$, where $J$ is the number of scene transitions. The second, \textsc{W-SEArch}, is a windowed variant achieving adaptive regret $O(\bar{\sigma}_I \sqrt{w})$ on any contiguous interval of length $w$. Both run entirely onboard at $O(K\log K)$ per-round cost with $O(K)$ state independent of the horizon, and consume no sensing or communication resources beyond detection, making them well suited to a resource-constrained aerial edge node. Experiments validate the theoretical predictions, with \textsc{W-SEArch} delivering $12\%$ to $35\%$ improvements as the transition frequency increases.

\subsection{Future Directions}
\textbf{Adaptive Window Selection.} Our fixed window size $w$ requires manual tuning based on the expected segment durations. Developing principled methods for online window adaptation~\cite{mhaisen25a, mhaisen2026partially} would eliminate this requirement and improve robustness to varying scene characteristics.

\textbf{Joint Trajectory and Policy Optimization.} We treat the UAV trajectory as given, but jointly optimizing the sensing trajectory and the processing policy could yield further gains. Coupling planning with policy selection~\cite{khial2025eswa, khial2025drone} would enable fully autonomous adaptive sensing systems that coordinate both where to sense and how to process observations.

\section*{Acknowledgment}
Research reported in this publication was supported by the Qatar Research Development and Innovation Council ARG01-0527-230356. The content is solely the responsibility of the authors and does not necessarily represent the official views of Qatar Research Development and Innovation Council.

\bibliographystyle{IEEEtran}
\bibliography{ref}

\end{document}